\newcounter{appsection}
\renewcommand{\theappsection}{\Alph{appsection}}
\crefname{appsection}{Appendix}{Appendices}
\Crefname{appsection}{Appendix}{Appendices}
\newcommand{\appsection}[2][]{%
    \refstepcounter{appsection}%
    \section*{\texorpdfstring{APPENDIX \theappsection \\ #2}{Appendix #2}}%
    \addcontentsline{toc}{section}{Appendix \theappsection: #2}%
    \ifx&#1&\else\label{#1}\fi
}
\let\originalleft\left
\let\originalright\right
\renewcommand{\left}{\mathopen{}\mathclose\bgroup\originalleft}
\renewcommand{\right}{\aftergroup\egroup\originalright}
\newtheoremstyle{indentedplain}% <name>
  {3pt}      % Space above
  {3pt}      % Space below
  {\rmfamily}% Body font
  {1em}      % Indent amount
  {\bfseries}% Head font
  {.}        % Punctuation after head
  {0.5em}    % Space after head
  {}         % Head spec
\newtheoremstyle{indentedremark}%
  {3pt}{3pt}{}% Body upright
  {1em}{\bfseries}{.}{0.5em}{}
\theoremstyle{indentedplain}
\newtheorem{theorem}{Theorem}
\newtheorem{lemma}{Lemma}
\newtheorem{definition}{Definition}
\newtheorem{fact}{Fact}
\theoremstyle{indentedremark}
\newcounter{enumA}
\newcounter{enumC}
\newcounter{enumO}
\newlist{enumA}{enumerate}{1}
\setlist[enumA,1]{%
  label=(A\arabic*),
  leftmargin=1cm,
  before=\setcounter{enumAi}{\value{enumA}},
  after=\setcounter{enumA}{\value{enumAi}}
}
\newlist{enumC}{enumerate}{1}
\setlist[enumC,1]{%
  label=(C\arabic*),
  leftmargin=1cm,
  before=\setcounter{enumCi}{\value{enumC}},
  after=\setcounter{enumC}{\value{enumCi}}
}
\newlist{enumO}{enumerate}{1}
\setlist[enumO,1]{%
  label=(O\arabic*),
  leftmargin=1cm,
  before=\setcounter{enumOi}{\value{enumO}},
  after=\setcounter{enumO}{\value{enumOi}}
}
\crefname{equation}{}{}
\DeclareMathAlphabet{\mathcal}{OMS}{cmsy}{m}{n} % Use the standard calligraphy font
\title{Higher-Order Gravitational Models:\\
A Tutorial on Spherical Harmonics and the Newtonian Model}
\author{Felipe Arenas-Uribe%
\thanks{University of Kentucky, \texttt{felipearur@uky.edu}}}
\date{January 2026}
\begin{document}

\maketitle

Gravitational interactions play a central role in the modeling, analysis, and control of space systems. While the Newtonian point-mass approximation provides useful insight and is sufficient for many preliminary analyses, real celestial bodies deviate significantly from spherical symmetry. Oblateness, localized mass concentrations, and higher-order shape irregularities introduce perturbations that can substantially influence spacecraft trajectories, particularly for low-altitude or long-duration missions. In such regimes, inaccuracies in the gravitational model may lead to cumulative orbit prediction errors, degraded navigation performance, and increased control effort. Higher-order gravitational field models are therefore essential for achieving the levels of accuracy required in modern mission design and operation.

The aim of this article is to serve as a tutorial introduction to spherical harmonic gravity models, clarifying their theoretical origins and the assumptions underlying their use. We begin by introducing the concept of a gravitational field and the conditions under which it may be represented by a scalar potential. Building on this foundation, higher-order gravitational field models are derived as solutions to the Laplace equation, leading to the widely used spherical harmonic formulation. This approach provides a systematic representation of gravitational effects arising from non-uniform mass distributions. The influence of higher-order terms on orbital dynamics is then illustrated through examples of perturbed systems, including satellites in Low Earth Orbit and spacecraft operating near irregularly shaped asteroids, demonstrating how deviations from the point-mass assumption arise in realistic mission scenarios.

\section{Equation of Motion and the Gravitational Field}

    We begin by making the modeling assumption that we live in a three-dimensional Euclidean space. While General Relativity shows this is not strictly true, it provides a convenient mathematical framework to work with. Let $\mathcal{V} \subset \BBR^3$ be a compact, simply connected set representing the points that compose the main body, with a mass density distribution $\rho: \mathcal{V} \to \BBR^+$. Let $o_{I} \in \mathcal{V}$ denote the center of mass of the main body, and let the total mass $M$ be
    \begin{equation}\label{eq:mass_integral}
        M \triangleq \int_{\mathcal{V}} \rho(p) \, dV.
    \end{equation}
    
    Let $\mathcal{S} \triangleq \BBR^3 \setminus \mathcal{V}$ denote the space outside the main body, where we wish to model gravitational effects. Consider a particle $o_{B} \in \mathcal{S}$ with mass $m > 0$, which we will treat as the test body. That is, we study the gravitational force exerted by the main body on the test body under the assumption that $M \gg m$.
    
    Next, we define a reference frame in which to study the motion of the test body. Let $\SF_{\rm I}$ be an inertial frame with origin at $o_{\rm I}$ and orthonormal basis vectors $\hat{I}$, $\hat{J}$, and $\hat{K}$. The position of $o_{\rm B}$ relative to $o_{\rm I}$ is $\mathbf{p} \triangleq x \hat{I} + y \hat{J} + z \hat{K}$, and the acceleration of $o_{\rmB}$ relative to $o_{\rmI}$ with respect to $\SF_{\rmI}$ is $\mathbf{a} \triangleq ^{\SF_{\rmI}}\ddot{(\mathbf{r})}$.
    
    Let $\mathbf{F}_{\rm g} = F_x \hat{I} + F_y \hat{J} + F_z \hat{K}$ denote the gravitational force of the main body on the test body. This force depends on position and scales linearly with the test-body mass. We also resolve the position, acceleration, and force vectors in the inertial frame as $p \triangleq [\mathbf{p}]_{\SF_{\rm I}}$, $a \triangleq [\mathbf{a}]_{\SF_{\rm I}}$ and $F_{\rm g} \triangleq [\mathbf{F}_{\rm g}]_{\SF_{\rm I}}$ respectively.
    
    Then, Newton's second law yields
    \begin{equation*}
        m \mathbf{a} = \mathbf{F}_{\rm g}(\mathbf{p}).
    \end{equation*}
    And resolving in the inertial frame yields
    \begin{equation*}
        \ddot{p}(t) = \frac{1}{m} F_{\rm g}\bigl(p(t)\bigr),
    \end{equation*}
    which is a second-order ordinary differential equation (ODE) governing the translational motion of the test body. Since the force $F_{\rm g}$ is linearly proportional to the mass of the test body, the resulting acceleration is independent of $m$. This motivates the introduction of the gravitational acceleration field, defined by $F(r) \triangleq \frac{1}{m} F_{\rm g}(r)$. As a result, the equation of motion may be written as
    \begin{equation}\label{eq:eq_mot}
        \ddot{p}(t) = F\bigl(p(t)\bigr),
    \end{equation}
    which emphasizes that gravitational motion is governed by a vector field defined over space. This formulation is particularly convenient for analytical gravity models, as it allows the gravitational environment to be characterized independently of the test body. In the next section we will explore how we can model this vector field.
    
    \begin{figure}[ht]
        \centering
        \includegraphics[width=0.3\linewidth]{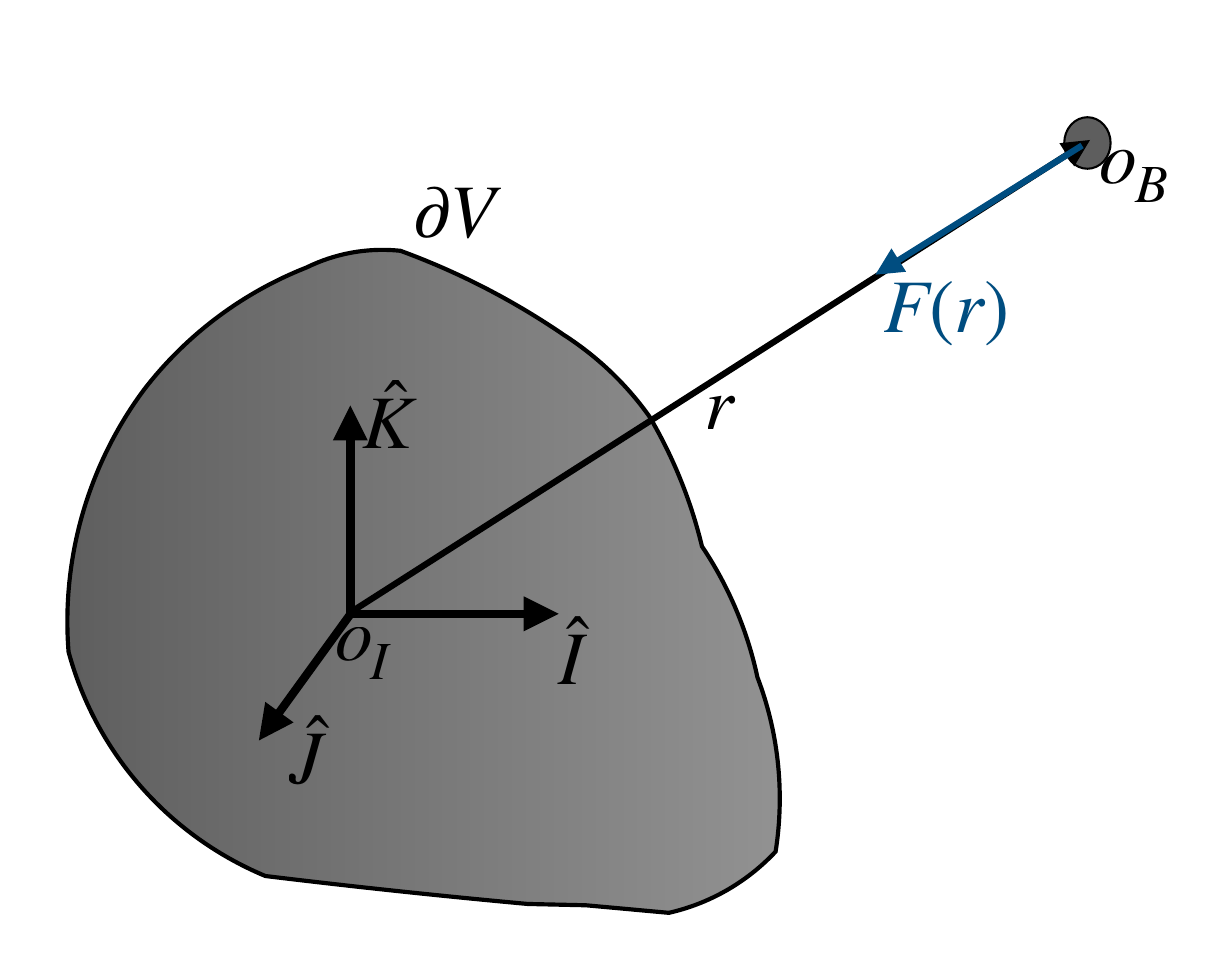}
        \caption{Inertial frame, main body and test body under consideration.}
        \label{fig:frame}
    \end{figure}

\section{Gravitational Field as a Conservative Field}

    A common approach in mechanics to study the motion of a particle in a vector field is to analyze the evolution of its energy and the interaction between external forces and that energy. Since $F$ depends only on position, the energy associated with the vector field in \cref{eq:eq_mot} is purely potential. Hence, we seek to develop a model of the gravitational force in terms of a potential energy function. The following results are motivated by \cite[Section~1.13]{arfken_mathematical_2013}. We begin by introducing several definitions from potential theory, together with their physical interpretations.
    
    \begin{definition}
        A vector field $F:\mathcal{S} \to \BBR^3$ is \emph{conservative} if there exists a scalar potential function $U:\mathcal{S} \to \BBR$ such that
        \begin{equation*}
            F(p) = - \nabla U(p).
        \end{equation*}
    \end{definition}
    
    Physically, a conservative force field is one whose action can be fully described by a scalar potential function. The force at any point is directed toward decreasing potential, and all work performed by the force corresponds to changes in stored energy rather than dissipation.
    
    \begin{definition}\label{def:irrotational}
        A vector field $F:\mathcal{S} \to \BBR^3$ is \emph{irrotational} if
        \begin{equation*}
            \nabla \times F(p) = 0.
        \end{equation*}
    \end{definition}
    
    Physically, irrotationality means that the field exhibits no local circulation. A particle moving under such a force experiences no tendency to rotate about any point. This is a local condition, describing the behavior of the field in an arbitrarily small neighborhood. To connect this local property to global energy considerations, we introduce the following definition.
    
    \begin{definition}\label{def:conservation_energy}
        A vector field $F:\mathcal{S} \to \BBR^3$ is said to have \emph{vanishing circulation} if
        \begin{equation*}
            \oint F(p) \cdot dp = 0
        \end{equation*}
        for every closed trajectory in $\mathcal{S}$.
    \end{definition}
    
    This definition formalizes the physical principle that the work performed between two points depends only on the endpoints and not on the path taken, which is precisely the hallmark of energy conservation. The following fact establishes a fundamental link between local field structure and global energetic behavior. The proof is in \cite[Section 1.13]{arfken_mathematical_2013}.
    
    \begin{fact} \label{fact:irrotational_conservation}
        Consider a vector field $F:\mathcal{S} \to \BBR^3$. Then $F$ is irrotational if and only if it has vanishing circulation.
    \end{fact}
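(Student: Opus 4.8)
The plan is to prove both implications with Stokes' theorem, which ties the circulation of $F$ around a closed curve to the flux of $\nabla \times F$ through any surface spanning that curve. Throughout I assume $F$ is continuously differentiable on $\mathcal{S}$, so that $\nabla \times F$ is defined and continuous; this regularity is implicit in \cref{def:irrotational} and is needed for Stokes' theorem to apply.

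For the forward direction, suppose $F$ is irrotational and let $\gamma$ be an arbitrary closed, piecewise-$C^1$ trajectory in $\mathcal{S}$. Since $\mathcal{S}$ is simply connected, $\gamma$ bounds a compact orientable surface $S \subset \mathcal{S}$ with $\partial S = \gamma$ — a spanning disk extracted from a null-homotopy of $\gamma$ and smoothed if necessary. Stokes' theorem then gives
\[
\oint_\gamma F(p)\cdot dp \;=\; \iint_S \bigl(\nabla \times F(p)\bigr)\cdot \hat n \, dA \;=\; 0,
\]
since the integrand vanishes identically, so $F$ has vanishing circulation.

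For the converse, suppose $F$ has vanishing circulation, fix $p_0 \in \mathcal{S}$, and fix an arbitrary unit vector $\hat n$. For small $\varepsilon > 0$ let $D_\varepsilon$ be the radius-$\varepsilon$ disk centered at $p_0$ in the plane through $p_0$ normal to $\hat n$, oriented by $\hat n$; since $\mathcal{S}$ is open, $D_\varepsilon \subset \mathcal{S}$ for $\varepsilon$ sufficiently small. By hypothesis and Stokes' theorem,
\[
0 \;=\; \oint_{\partial D_\varepsilon} F(p)\cdot dp \;=\; \iint_{D_\varepsilon}\bigl(\nabla \times F(p)\bigr)\cdot \hat n \, dA .
\]
Dividing by the area $\pi \varepsilon^2$ and letting $\varepsilon \to 0$, continuity of $\nabla \times F$ forces $\bigl(\nabla\times F(p_0)\bigr)\cdot\hat n = 0$. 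As $\hat n$ was arbitrary, $\nabla \times F(p_0) = 0$, and as $p_0$ was arbitrary, $F$ is irrotational.

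The main obstacle is the topological input used in the forward direction: one must know that every closed curve in $\mathcal{S}$ is the boundary of a surface lying \emph{entirely within} $\mathcal{S}$. This is precisely where the geometry of the exterior region enters, and it is why the simple connectivity of $\mathcal{S} = \BBR^3 \setminus \mathcal{V}$ should be recorded as a standing assumption rather than taken for granted; without it the implication genuinely fails, the classical obstruction being an irrotational field that nonetheless has nonzero circulation around a removed tube or axis (the ``vortex'' field on $\BBR^3$ minus a line being the standard schematic example). The converse direction, by contrast, is purely local and requires only continuity of the curl together with openness of $\mathcal{S}$, so it carries no such caveat.
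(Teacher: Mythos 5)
Your proof is correct and takes essentially the same route as the paper, which delegates this fact to the standard Stokes'-theorem argument of the cited reference (Arfken, Section 1.13): irrotationality gives vanishing circulation by integrating the curl over a spanning surface, and the converse follows from the shrinking-loop limit using continuity of $\nabla \times F$. Your explicit flagging that the forward direction needs $\mathcal{S}$ to be simply connected (so every closed curve spans a surface lying entirely in $\mathcal{S}$) is a worthwhile caveat that the paper leaves implicit, since it assumes $\mathcal{V}$, not its exterior $\mathcal{S}$, is simply connected.
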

    
    This equivalence highlights that the absence of local rotational effects guarantees that no energy can be gained or lost by traversing a closed loop. We can now state an existence result for conservative force fields. The proof is in \cite[Section 1.13]{arfken_mathematical_2013}.
    
    \begin{fact} \label{fact:potential_existence}
        Consider the vector field $F:\mathcal{S} \to \BBR^3$. The following statements hold:
        \begin{enumerate}
            \item $F$ is conservative if and only if $F$ is irrotational.
            \item $F$ is conservative if and only if $F$ conserves energy.
        \end{enumerate}
    \end{fact}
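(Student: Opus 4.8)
The plan is to reduce both equivalences to a single nontrivial construction — a potential defined by a line integral — and to use \cref{fact:irrotational_conservation} as the bridge between irrotationality and vanishing circulation. I would first dispose of the two ``easy'' implications, which are both instances of the same vector-calculus identity. If $F$ is conservative, write $F = -\nabla U$; then $\nabla \times F = -\nabla \times (\nabla U) = 0$, since the curl of a gradient vanishes identically (this requires $U$ twice continuously differentiable, i.e.\ $F \in C^1$), which gives the forward direction of item~(1). Likewise, for any closed trajectory $\gamma$ in $\mathcal{S}$, the fundamental theorem for line integrals gives $\oint_\gamma F \cdot dp = -\oint_\gamma \nabla U \cdot dp = 0$ because the potential returns to its starting value; hence a conservative field has vanishing circulation, the forward direction of item~(2).

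The substantive step is the converse of item~(1), that irrotational implies conservative. Here I would first invoke \cref{fact:irrotational_conservation} to convert irrotationality into vanishing circulation on $\mathcal{S}$. This is the point where the geometry of the domain enters: since $\mathcal{V}$ is compact, $\mathcal{S} = \BBR^3 \setminus \mathcal{V}$ is open and path-connected, and for the ball-like bodies considered here it is simply connected, which is precisely the hypothesis \cref{fact:irrotational_conservation} needs; I would state this explicitly rather than leave it implicit. Then, fixing a base point $p_0 \in \mathcal{S}$, I would define
\begin{equation*}
    U(p) \triangleq -\int_{p_0}^{p} F(q) \cdot dq,
\end{equation*}
the line integral taken along any piecewise-$C^1$ path in $\mathcal{S}$ from $p_0$ to $p$. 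Two things must be checked: that $U$ is well defined, which follows because any two such paths differ by a closed loop and vanishing circulation forces the corresponding integrals to coincide; and that $\nabla U = -F$, which follows by differentiating $U$ along a short coordinate segment appended to the path and passing to the limit using continuity of $F$. This produces $F = -\nabla U$, so $F$ is conservative.

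Item~(2) then follows by chaining the equivalences: by item~(1) together with \cref{fact:irrotational_conservation}, $F$ is conservative iff $F$ is irrotational iff $F$ has vanishing circulation, and vanishing circulation is exactly the statement that $F$ conserves energy in the sense of \cref{def:conservation_energy} and the work--energy interpretation accompanying it. Alternatively, the converse of~(2) can be proved directly, since vanishing circulation already makes the line-integral potential above well defined and the same gradient computation applies verbatim.

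I expect the only genuine obstacle to lie in the converse of item~(1): establishing the well-definedness and $C^1$ regularity of the line-integral potential cleanly, and being careful to state — and, if one wishes to be self-contained, to justify — the simple-connectedness of $\mathcal{S}$ that licenses the use of \cref{fact:irrotational_conservation}. Everything else reduces to routine manipulation with standard identities.
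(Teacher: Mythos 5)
The paper offers no proof of this fact at all—it simply cites \cite[Section~1.13]{arfken_mathematical_2013}—and your argument is precisely the standard one found in such references: curl of a gradient vanishes, the fundamental theorem for line integrals gives vanishing circulation, and the converse is obtained by constructing the potential as a path-independent line integral, with \cref{fact:irrotational_conservation} bridging irrotationality and vanishing circulation. So your proposal is correct and amounts to supplying the proof the paper delegates to the reference; your explicit attention to the simple-connectedness of $\mathcal{S}$ is a reasonable refinement of a hypothesis the paper leaves implicit.
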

    
    This result is fundamental for modeling gravitational fields. It asserts that the existence of a gravitational potential is not an additional assumption, but a direct consequence of the field being irrotational and having vanishing circulation. As a result, gravitational forces can be equivalently described through a scalar potential, greatly simplifying analysis. This motivates our first modeling assumption on the gravitational field. 
    
    \begin{enumA}
        \item The gravitational vector field is irrotational or has vanishing circulation. \label{assum:potential_existence}
    \end{enumA} 

    Then, we leverage on this assumption to formalize the next result for the existence of a gravitational potential function. The proof follows directly from \cref{fact:irrotational_conservation} and \cref{fact:potential_existence}.
    
    \begin{lemma}\label{lemma:grav_potential_existence}
        Consider the vector field $F:\mathcal{S} \to \BBR^3$ and assume that \ref{assum:potential_existence} is satisfied. Then, there exists a potential function $U: \mathcal{S} \to \BBR$ such that
        \begin{equation}\label{eq:conservative_potential}
            F(p) = - \nabla U(p).
        \end{equation}
    \end{lemma}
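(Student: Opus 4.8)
The plan is to chain the two facts already established, since \cref{eq:conservative_potential} is exactly the statement that $F$ is a conservative field on $\mathcal{S}$; hence it suffices to deduce from \ref{assum:potential_existence} that $F$ is conservative. The argument is therefore essentially a citation chain, but it is worth spelling out the logical steps and the one nontrivial point hiding inside them.

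First I would resolve the disjunction in \ref{assum:potential_existence}. If $F$ is taken to be irrotational, nothing needs doing; if instead $F$ is taken to have vanishing circulation, then \cref{fact:irrotational_conservation} promotes this to irrotationality, so in either case $\nabla \times F(p) = 0$ for every $p \in \mathcal{S}$. Next I would apply the first item of \cref{fact:potential_existence}: since $F$ is irrotational it is conservative, and by the \emph{definition} of a conservative field there exists a scalar function $U:\mathcal{S}\to\BBR$ with $F(p) = -\nabla U(p)$, which is \cref{eq:conservative_potential}. If one wants $U$ concretely, fix a reference point $p_0 \in \mathcal{S}$ and set $U(p) \triangleq -\int_{\gamma} F(q)\cdot dq$ over any path $\gamma$ in $\mathcal{S}$ from $p_0$ to $p$; \cref{fact:irrotational_conservation} makes the line integral path-independent, so $U$ is well defined, and the fundamental theorem for line integrals returns $F = -\nabla U$.

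I do not expect a genuine obstacle in the deduction itself, but the step deserving the most care is the passage from the \emph{pointwise} condition $\nabla\times F = 0$ to the \emph{global} existence of a single-valued potential on all of $\mathcal{S}$. This hinges on the topology of $\mathcal{S}$: a globally defined $U$ exists only if every closed loop in $\mathcal{S}$ can be contracted. This is precisely why the domain was set up as $\mathcal{S} = \BBR^3 \setminus \mathcal{V}$ with $\mathcal{V}$ compact and simply connected, so that $\mathcal{S}$ carries no essential loops and vanishing circulation around closed curves genuinely yields a globally defined potential rather than a merely local one. Once this topological point is granted, the lemma follows exactly as announced, i.e. directly from \cref{fact:irrotational_conservation} and \cref{fact:potential_existence}.
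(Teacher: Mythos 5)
Your argument is correct and follows essentially the same route as the paper, which proves the lemma purely by chaining \cref{fact:irrotational_conservation} and \cref{fact:potential_existence}; your explicit line-integral construction of $U$ and the remark on the topology of $\mathcal{S}$ are elaborations of what those cited facts already encapsulate, not a different method. (One caution: the complement of a compact simply connected set in $\BBR^3$ is not automatically simply connected in full generality, so that side remark should be phrased as a modeling assumption rather than a consequence, but it does not affect the validity of the deduction from the stated facts.)
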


\section{Modeling the Gravitational Potential}

    Now that we have shown that there exists a gravitational potential function, we focus on modeling this function. We begin by establishing a relation between the gravitational force and the mass distribution of the main body. To this end, we invoke Gauss's law for gravity, which states that the flux of the gravitational field through a closed surface is proportional to the mass enclosed.
    
    \begin{enumA}
        \item The gravitational field $F$ satisfies \label{ass:Gauss_law}
        \begin{equation}\label{eq:Gauss_law_int}
            \oint_{\partial \mathcal{V}} F \cdot \hat{n} dS = -4 \pi G M.
        \end{equation}
    \end{enumA}
    
    This is where the gravitational constant $G = 6.6743 × 10-11 \frac{\rm m^3}{\text{kg } \mathrm{s}^2}$ shows up. 
    
    \begin{lemma}\label{lemma:Gauss}
        Assume that \ref{ass:Gauss_law} is satisfied. Then,
        \begin{equation}\label{eq:Gauss_law_diff}
            \nabla \cdot F(p) = -4 \pi G \rho(p).
        \end{equation}
    \end{lemma}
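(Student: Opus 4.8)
The plan is to pass from the integral identity \eqref{eq:Gauss_law_int} to the pointwise identity \eqref{eq:Gauss_law_diff} by combining the divergence theorem with a standard localization argument. First I would note that Gauss's law, understood as a physical principle, applies not only to $\partial\mathcal{V}$ but to the boundary $\partial\mathcal{W}$ of any compact, simply connected region $\mathcal{W}\subset\BBR^3$, with the enclosed mass given by $\int_{\mathcal{W}}\rho\,dV$ (the flux law is additive over disjoint regions, and $\rho$ is supported on $\mathcal{V}$, so regions outside $\mathcal{V}$ simply enclose zero mass). Thus for every such $\mathcal{W}$,
\begin{equation*}
  \oint_{\partial\mathcal{W}} F\cdot\hat{n}\,dS = -4\pi G\int_{\mathcal{W}}\rho(p)\,dV.
\end{equation*}

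Second, assuming $F$ is continuously differentiable on a neighborhood of $\overline{\mathcal{W}}$, I would apply the divergence theorem to rewrite the surface integral as a volume integral,
\begin{equation*}
  \oint_{\partial\mathcal{W}} F\cdot\hat{n}\,dS = \int_{\mathcal{W}} \nabla\cdot F(p)\,dV,
\end{equation*}
so that $\int_{\mathcal{W}}\nabla\cdot F(p)\,dV = \int_{\mathcal{W}}\bigl(-4\pi G\rho(p)\bigr)\,dV$ for every admissible region $\mathcal{W}$. Third comes the localization step: fix an arbitrary $p_0$, take $\mathcal{W}=\overline{B(p_0,\varepsilon)}$, divide both sides by the volume $\tfrac{4}{3}\pi\varepsilon^3$, and let $\varepsilon\to 0$. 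Since $\nabla\cdot F$ and $\rho$ are continuous, the averages converge to the pointwise values and $\nabla\cdot F(p_0) = -4\pi G\rho(p_0)$; as $p_0$ was arbitrary, \eqref{eq:Gauss_law_diff} follows. Equivalently, if $g\triangleq\nabla\cdot F+4\pi G\rho$ is continuous and $\int_{\mathcal{W}}g=0$ over all balls, then $g\equiv 0$ by the usual sign-of-$g$ contradiction argument near a point where $g\neq 0$.

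The main obstacle is not the calculus — the divergence theorem is routine — but the logical gap that \ref{ass:Gauss_law} is stated only for the single surface $\partial\mathcal{V}$, whereas the pointwise conclusion genuinely needs the flux law on arbitrarily small surfaces. Bridging this requires either promoting the assumption to all compact regions (justified physically by superposition of point-mass fields, each of which satisfies Gauss's law) or explicitly restricting the scope of the claim. A secondary technical point is regularity: one needs $F\in C^1$ (or at least $F$ with integrable distributional divergence) for the divergence theorem, and continuity of $\rho$ (or a Lebesgue-point argument) to pass to the pointwise identity; inside $\mathcal{V}$, where $\rho$ may be discontinuous, \eqref{eq:Gauss_law_diff} should be read in the almost-everywhere or distributional sense, whereas on $\mathcal{S}$ it holds classically with $\rho\equiv 0$.
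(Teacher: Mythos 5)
Your proposal is correct and, in fact, more careful than the paper's own argument. The paper applies the divergence theorem to the single region $\mathcal{V}$, arrives at $\int_{\mathcal{V}} \nabla\cdot F\,dV = \int_{\mathcal{V}} (-4\pi G\rho)\,dV$, and then asserts that ``for this equality to hold, the integrands must be equal.'' That inference is not valid as stated: equality of integrals over one fixed region does not force equality of integrands (any two distinct functions with the same integral over $\mathcal{V}$ give a counterexample). What you do differently is precisely what is needed to make that step legitimate: you promote the flux law to arbitrary compact regions $\mathcal{W}$ (correctly flagging that \ref{ass:Gauss_law} as written covers only $\partial\mathcal{V}$, so this is a strengthening that must be justified physically or added as a hypothesis), apply the divergence theorem on each $\mathcal{W}$, and then localize via shrinking balls, or equivalently the vanishing-integral lemma for continuous integrands, to obtain the pointwise identity \eqref{eq:Gauss_law_diff}. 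Your remarks on regularity ($F\in C^1$ for the divergence theorem, continuity of $\rho$ or an almost-everywhere/distributional reading inside the body) address points the paper passes over silently. In short, both arguments pivot on the divergence theorem, but your localization over arbitrarily small regions supplies the missing ingredient behind the paper's one-region shortcut; the price, which you explicitly acknowledge, is that \ref{ass:Gauss_law} must be strengthened to arbitrary enclosing surfaces or the claim's scope restricted.
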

    
    \begin{proof}\noindent
        Applying the Divergence Theorem on the left hand side of \ref{eq:Gauss_law_diff} yields
        \begin{equation}\label{eq:divergence_thm_on_grav}
            \oint_{\partial \mathcal{V}} F \cdot \hat{n} dS = \int_{\mathcal{V}}\nabla \cdot F d \mathcal{V}.
        \end{equation}
    
        Substituting \cref{eq:mass_integral,eq:divergence_thm_on_grav} in \cref{eq:Gauss_law_int} yields
        \begin{equation*}
            \int_{\mathcal{V}}\nabla \cdot F d \mathcal{V} = -4 \pi G \int_{\mathcal{V}} \rho dV,
        \end{equation*}
        which is equivalent to
        \begin{equation*}
            \int_{\mathcal{V}}(\nabla \cdot F) d \mathcal{V} = \int_{\mathcal{V}} (-4 \pi G  \rho) dV.
        \end{equation*}
        For this equality to hold, the integrands must be equal. Thus,
        \begin{equation*}
            \nabla \cdot F = -4 \pi G  \rho.
        \end{equation*}
    \end{proof}

    \begin{theorem}\label{thm:Grav_Field_Laplace_Eq}
        Consider the gravitational field $F: \mathcal{S} \to \BBR^3$. Assume that \ref{assum:potential_existence} and \ref{ass:Gauss_law} are satisfied. Then, there exists a gravitational potential $U: \mathcal{S} \to \BBR$ such that for all $p \in \mathcal{S}$,
        \begin{equation}\label{eq:Laplace_equation}
            \nabla^2U(p) = 0.
        \end{equation}
    \end{theorem}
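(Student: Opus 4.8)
The plan is to compose the two lemmas already in hand. Since assumption \ref{assum:potential_existence} holds, \cref{lemma:grav_potential_existence} supplies a potential $U:\mathcal{S}\to\BBR$ satisfying \cref{eq:conservative_potential}, i.e. $F=-\nabla U$ on $\mathcal{S}$. I would first record the mild regularity remark that this $U$ is twice continuously differentiable wherever the field $F$ is continuously differentiable, so that the Laplacian $\nabla^2U=\nabla\cdot(\nabla U)$ is well defined pointwise on $\mathcal{S}$; this is what lets us move freely between the gradient and divergence operations.

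Next I would argue that the exterior region carries no mass. Extending the density by $\rho(p)=0$ for every $p\in\mathcal{S}=\BBR^3\setminus\mathcal{V}$, any compact subregion $\mathcal{W}\subset\mathcal{S}$ with piecewise-smooth boundary encloses zero mass. Rerunning the computation of \cref{lemma:Gauss} on such a $\mathcal{W}$ — apply the Divergence Theorem to Gauss's law $\oint_{\partial\mathcal{W}}F\cdot\hat n\,dS=-4\pi G\int_{\mathcal{W}}\rho\,dV=0$ and equate integrands — gives $\nabla\cdot F(p)=0$ for all $p\in\mathcal{S}$. Equivalently, the differential identity \cref{eq:Gauss_law_diff} specializes on $\mathcal{S}$ to $\nabla\cdot F\equiv0$, because $\rho$ vanishes there.

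Finally, substituting \cref{eq:conservative_potential} into this vanishing divergence yields, for every $p\in\mathcal{S}$,
\[
  0=\nabla\cdot F(p)=\nabla\cdot\bigl(-\nabla U(p)\bigr)=-\nabla^2U(p),
\]
which is exactly \cref{eq:Laplace_equation}, completing the argument.

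The main obstacle is purely a matter of domains: \cref{lemma:Gauss} as stated is phrased with the flux integral over $\partial\mathcal{V}$ and therefore delivers the pointwise identity on $\mathcal{V}$, whereas the theorem needs $\nabla\cdot F=0$ throughout $\mathcal{S}$. Bridging this gap requires the observation that Gauss's law in the form \ref{ass:Gauss_law} is really a statement about arbitrary closed surfaces — the enclosed mass being what matters — so that a surface lying entirely in $\mathcal{S}$ encloses no mass and hence has vanishing flux, plus the regularity needed to pass from that integral identity back to a pointwise one. Once this is granted, the rest is the one-line composition of $\nabla\cdot$ with $-\nabla$ shown above.
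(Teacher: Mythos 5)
Your proposal is correct and follows essentially the same route as the paper's proof: invoke \cref{lemma:grav_potential_existence} to obtain $U$ with $F=-\nabla U$, use \cref{lemma:Gauss} together with $\rho\equiv 0$ on $\mathcal{S}$ to get $\nabla\cdot F=0$ there, and substitute to obtain $\nabla^2 U=0$. Your extra care about the domain of validity of \cref{lemma:Gauss} (rerunning the flux argument on arbitrary regions contained in $\mathcal{S}$) is a refinement the paper glosses over, but it does not change the argument's structure.
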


    \begin{proof}
        It follows from \cref{lemma:grav_potential_existence} that there exists a scalar potential $U: \mathcal{S} \to \BBR$ such that
        \begin{equation*}
            F(p) = - \nabla U(p).
        \end{equation*}
        Next, it follows from \cref{lemma:Gauss} that the gravitational field satisfies
        \begin{equation*}
            \nabla \cdot F(p) = -4 \pi G \rho(p).
        \end{equation*}
        Note that for all $p \in \mathcal{S}$, we have $\rho(p) = 0$. Thus, $\nabla \cdot F(p) = 0$. And substituting $F = - \nabla U$ yields
        \begin{equation*}
            \nabla \cdot (-\nabla U(p)) = -\nabla^2 U(p) = 0.
        \end{equation*}
        Hence, the gravitational potential $U$ satisfies
        \begin{equation*}
            \nabla^2 U(p) = 0.
        \end{equation*}
    \end{proof}

    This result shows that the gravitational potential is governed by \textit{Laplace’s equation} \cref{eq:Laplace_equation}, which is a partial differential equation (PDE). As a consequence, modeling the gravitational field reduces to a well-posed boundary-value problem of Laplace’s equation. Functions satisfying this equation are known as \textit{harmonic functions}, and their mathematical structure will form the basis of the gravitational models developed in the remainder of this work.

\section{Exterior solution of Laplace's equation}

    To solve for the gravitational potential outside the main body, we formulate Laplace's equation \cref{eq:Laplace_equation} in spherical coordinates. Consider the spherical coordinates: radius $r \in [0,\infty)$, inclination $\theta \in [0,\pi]$, and azimuth $\varphi \in [0,2\pi)$, defined as $r \triangleq \|p\|_2$, $\theta \triangleq \arccos\!\left(\frac{z}{\|p\|_2}\right)$ and $\varphi \triangleq \operatorname{atan2}(y,x)$. In these coordinates, Laplace's equation takes the form
    \begin{equation}\label{eq:spherical_Laplace}
        \frac{1}{r^2}\frac{\partial}{\partial r}
        \left( r^2 \frac{\partial U}{\partial r} \right)
        + \frac{1}{r^2 \sin\theta}\frac{\partial}{\partial \theta}
        \left( \sin\theta \frac{\partial U}{\partial \theta} \right)
        + \frac{1}{r^2 \sin^2\theta}
        \frac{\partial^2 U}{\partial \varphi^2}
        = 0.
    \end{equation}

    The following result characterizes the solutions to Laplace's equation. The proof is based on \cite[Section 9.3]{arfken_mathematical_2013}.

    \begin{theorem}\label{thm:Laplace_solution}
        Consider Laplace's equation in spherical coordinates \cref{eq:spherical_Laplace}. Assume that solutions admit separation of variables for each variable. Then, the general solution of \cref{eq:spherical_Laplace} is
        \begin{equation}\label{eq:spherical_Laplace_gen_solution}
            U(r,\theta,\varphi)
            =
            \sum_{\ell=0}^{\infty}
            \sum_{m=-\ell}^{\ell}
            R_\ell(r)\,\Theta_{\ell,m}(\theta)\,\Phi_m(\varphi),
        \end{equation}
        where $\ell \in \mathbb{Z}_{\ge 0}$ and $m \in \mathbb{Z}$ such that $|m| \le \ell$. And $R_\ell(r)$, $\Theta_{\ell,m}(\theta)$ and $\Phi_m(\varphi)$ satisfy the ordinary differential equations
        \begin{align}
            &\frac{1}{r^2}
            \frac{d}{dr}\!\left( r^2 \frac{dR_\ell(r)}{dr} \right)
            - \frac{\ell(\ell+1)}{r^2} R_\ell(r)
            = 0,
            \label{eq:radial_ode}\\
            &\frac{1}{\sin\theta}
            \frac{d}{d\theta}\!\left( \sin\theta \frac{d\Theta_{\ell m}(\theta)}{d\theta} \right)
            - \frac{m^2}{\sin^2\theta}\Theta_{\ell m}(\theta)
            + \ell(\ell+1)\Theta_{\ell m}(\theta)
            = 0,
            \label{eq:polar_ode}\\
            &\frac{d^2\Phi(\varphi)}{d\varphi^2} + m^2 \Phi(\varphi) = 0.\label{eq:azimuth_ode}
        \end{align}
    \end{theorem}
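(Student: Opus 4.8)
The plan is to run the classical separation-of-variables argument in two nested stages and then close with a linearity/completeness remark. First I would posit a product solution $U(r,\theta,\varphi) = R(r)\,\Theta(\theta)\,\Phi(\varphi)$, which the hypothesis permits, substitute it into \cref{eq:spherical_Laplace}, and multiply the resulting identity by $r^2/(R\Theta\Phi)$. Since the $r$-derivatives act only on $R$ while the $\theta,\varphi$-derivatives act only on $\Theta\Phi$, the equation rearranges into a term depending on $r$ alone set equal to a term depending on $(\theta,\varphi)$ alone; for this to hold at every point, each side must equal a constant, which I would name $\ell(\ell+1)$ — at this stage merely a convenient reparametrization of an arbitrary constant. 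This immediately produces the radial equation \cref{eq:radial_ode}.

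Next I would isolate the angular remainder, multiply through by $\sin^2\theta$, and observe that it splits as a function of $\theta$ alone plus the term $\Phi''/\Phi$ depending on $\varphi$ alone; the same argument forces a second separation constant, which I would call $m^2$, giving \cref{eq:azimuth_ode} at once. Requiring $U$ to be single-valued, i.e. $2\pi$-periodic in $\varphi$, restricts the admissible solutions of \cref{eq:azimuth_ode} to $\Phi_m(\varphi)\in\operatorname{span}\{\cos m\varphi,\sin m\varphi\}$ with $m\in\mathbb{Z}_{\ge 0}$ (equivalently $e^{\pm i m\varphi}$ with $m\in\mathbb{Z}$). Substituting $m^2$ back into the $\theta$-term and dividing by $\sin^2\theta$ yields \cref{eq:polar_ode}, the associated Legendre equation; demanding that its solutions $\Theta_{\ell m}$ remain bounded at the poles $\theta=0,\pi$ is exactly what forces $\ell\in\mathbb{Z}_{\ge 0}$ with $|m|\le\ell$, the index range asserted in the statement.

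Finally, because \cref{eq:spherical_Laplace} is linear and homogeneous, any linear combination of the product solutions $R_\ell(r)\,\Theta_{\ell m}(\theta)\,\Phi_m(\varphi)$ over the admissible indices is again a solution; letting the combination range over all $\ell\ge 0$ and $|m|\le\ell$ gives the series \cref{eq:spherical_Laplace_gen_solution}. That this series represents \emph{every} solution — i.e. that the separated solutions are complete — follows from the spectral theory of the Sturm–Liouville problems \cref{eq:polar_ode} and \cref{eq:azimuth_ode} on their respective intervals, whose eigenfunctions (associated Legendre functions and trigonometric functions, assembled into spherical harmonics) form an orthonormal basis of $L^2$ on the sphere. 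I would cite \cite[Section 9.3]{arfken_mathematical_2013} for this rather than reproduce it.

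The step I expect to be the main obstacle is precisely this last one, together with the quantization claims: the separation computation itself is mechanical, but justifying that the separation constants must take the discrete forms $\ell(\ell+1)$ and $m^2$ with $\ell\in\mathbb{Z}_{\ge 0}$, $|m|\le\ell$, and that the resulting series exhausts all solutions, genuinely requires the regularity conditions at $\theta\in\{0,\pi\}$ and the periodicity in $\varphi$ (implicit physical requirements not listed among the theorem's stated hypotheses) plus a completeness theorem. In a tutorial treatment I would make these conditions explicit in a remark immediately after the statement and defer the completeness argument to the cited reference.
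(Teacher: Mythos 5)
Your proposal is correct and follows essentially the same separation-of-variables argument as the paper, differing only in the order of separation (you peel off the radial equation first via the constant $\ell(\ell+1)$, whereas the paper isolates the azimuthal equation first via $-m^2$ and then separates $r$ from $\theta$), which leads to the identical three ODEs and the same series. Your explicit treatment of the $2\pi$-periodicity in $\varphi$, the regularity at $\theta\in\{0,\pi\}$, and the deferred completeness argument is, if anything, slightly more careful than the paper's proof, which invokes the first two conditions in passing and does not address completeness at all.
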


    \begin{proof}
        We seek separable solutions of the form
        \begin{equation}\label{eq:general_solution}
            U(r,\theta,\varphi) = R(r)\,\Theta(\theta)\,\Phi(\varphi),
        \end{equation}
        where $R(r)$ captures the radial dependence and $\Theta(\theta)\Phi(\varphi)$ captures the angular dependence. Substituting \cref{eq:general_solution} into \cref{eq:spherical_Laplace} and dividing by $R(r)\Theta(\theta)\Phi(\varphi)$ yields
        \begin{equation}\label{eq:linear_spherical_Laplace}
            \frac{1}{r^2 R(r)}
            \frac{d}{dr}\!\left( r^2 \frac{dR(r)}{dr} \right)
            + \frac{1}{r^2 \Theta(\theta)\sin\theta}
            \frac{d}{d\theta}\!\left( \sin\theta \frac{d\Theta(\theta)}{d\theta} \right)
            + \frac{1}{r^2 \sin^2\theta\,\Phi(\varphi)}
            \frac{d^2\Phi(\varphi)}{d\varphi^2}
            = 0.
        \end{equation}
        
        Note that all derivatives are now ordinary derivatives. Multiplying \cref{eq:linear_spherical_Laplace} by $r^2 \sin^2\theta$ gives
        \begin{equation}\label{eq:first_sep}
            \frac{1}{\Phi(\varphi)}
            \frac{d^2\Phi(\varphi)}{d\varphi^2}
            =
            -\sin^2\theta
            \frac{1}{R(r)}
            \frac{d}{dr}\!\left( r^2 \frac{dR(r)}{dr} \right)
            - \frac{\sin\theta}{\Theta(\theta)}
            \frac{d}{d\theta}\!\left( \sin\theta \frac{d\Theta(\theta)}{d\theta} \right).
        \end{equation}
        
        Since the left-hand side depends only on $\varphi$ while the right-hand side depends only on $r$ and $\theta$, both sides must be equal to a constant. Since $\varphi$ is an azimuthal angle, the solution $\Phi$ must be single-valued and $2\pi$-periodic, i.e., $\Phi(\varphi + 2\pi) = \Phi(\varphi)$. Thus, define the separation constant as $\lambda_1 \triangleq -m^2$, where $m \in \mathbb{Z}$. The resulting separated equations are
        \begin{align}
            &\frac{d^2\Phi(\varphi)}{d\varphi^2} + m^2 \Phi(\varphi) = 0,\\
            &\frac{1}{R(r)}\frac{d}{dr}\!\left( r^2 \frac{dR(r)}{dr} \right)
            =
            -\frac{1}{\Theta(\theta)\sin\theta}
            \frac{d}{d\theta}\!\left( \sin\theta \frac{d\Theta(\theta)}{d\theta} \right)
            + \frac{m^2}{\sin^2\theta}.
            \label{eq:second_sep}
        \end{align}
        
        The remaining equation separates $r$ and $\theta$ by equating each side of \cref{eq:second_sep} to a constant $\lambda_2$. To ensure regularity of the polar solution $\Theta(\theta)$ at $\theta = 0$ and $\theta = \pi$, the separation constant is chosen as $\lambda_2 \triangleq \ell(\ell+1)$, where $\ell \in \mathbb{Z}_{\ge 0}$ and $|m| \le \ell$. The resulting ordinary differential equations are
        \begin{align}
            &\frac{1}{\sin\theta}
            \frac{d}{d\theta}\!\left( \sin\theta \frac{d\Theta_{\ell m}(\theta)}{d\theta} \right)
            - \frac{m^2}{\sin^2\theta}\Theta_{\ell m}(\theta)
            + \ell(\ell+1)\Theta_{\ell m}(\theta)
            = 0,\\
            &\frac{1}{r^2}
            \frac{d}{dr}\!\left( r^2 \frac{dR_\ell(r)}{dr} \right)
            - \frac{\ell(\ell+1)}{r^2} R_\ell(r)
            = 0.
        \end{align}
        
        Consequently, the solution \cref{eq:general_solution} to the partial differential equation \cref{eq:linear_spherical_Laplace} is now given as the solution to the system of ordinary differential equations \cref{eq:azimuth_ode,eq:polar_ode,eq:radial_ode}, which takes the form
        \begin{equation}
            U(r,\theta,\varphi)
            =
            \sum_{\ell=0}^{\infty}
            \sum_{m=-\ell}^{\ell}
            R_\ell(r)\,\Theta_{\ell,m}(\theta)\,\Phi_m(\varphi),
        \end{equation}
    
    \end{proof}
    
    \Cref{thm:Laplace_solution} suggests that solving Laplace's equation reduces to solving a sequence of three ordinary differential equations. This formulation provides a systematic framework for constructing gravitational potentials. We will proceed to find solutions for the radial and angular components in the following subsections.

\subsection{The Radial Solution}

    Equation \eqref{eq:radial_ode} is an Euler--Cauchy equation, which admits power--law solutions of the form $R_\ell(r)=r^k$. Substitution into \eqref{eq:radial_ode} yields the characteristic equation
    \begin{equation*}
        k(k+1) - \ell(\ell+1) = 0,
    \end{equation*}
    whose roots are $k=\ell$ and $k=-(\ell+1)$. Since $R_\ell(r) \to 0$ as $r \to \infty$, then the general radial solution is
    \begin{equation}\label{eq:radial_exterior}
        R_\ell(r) = \frac{B_\ell}{r^{\ell+1}}.
    \end{equation}
    where $B_\ell \in \mathbb{R}$ is an integration constant, directly related to the mass moments of the body.
    
\subsection{The Angular Solution}

The angular dependence of the gravitational potential \cref{eq:spherical_Laplace_gen_solution} can be separated into an azimuthal component $\Phi_m(\varphi)$ and a polar component $\Theta_{\ell m}(\theta)$, where $\ell \in \mathbb{Z}_{\ge 0}$ and $m \in \mathbb{Z}$ with $|m| \le \ell$. 

The solution to the azimuthal equation \cref{eq:azimuth_ode} is
\begin{equation}\label{eq:azimuth_solution}
    \Phi_m(\varphi) =
    \begin{cases}
        \cos(m\varphi), & m = 0,1,2,\dots, \\[0.2em]
        \sin(m\varphi), & m = 1,2,\dots,
    \end{cases}
\end{equation}
which forms an orthonormal set over $\varphi \in [0,2\pi)$.  

The polar equation \cref{eq:polar_ode}, known as the associated Legendre equation \cite[Chapter 12.5]{arfken_mathematical_2013}, has solutions given by the associated Legendre functions $P_\ell^m(\cos\theta)$. These can be expressed in terms of the standard Legendre polynomials $P_\ell(x)$ via Rodrigues' formula:
\begin{equation}\label{eq:polar_solution}
    \Theta_{\ell m}(\theta) = P_\ell^m(\cos\theta) \;\triangleq\; 
    (-1)^m (1-\cos^2\theta)^{m/2} 
    \frac{d^m}{d(\cos\theta)^m} 
    \left[ \frac{1}{2^\ell \ell!} \frac{d^\ell}{d(\cos\theta)^\ell} \left( (\cos\theta)^2 - 1 \right)^\ell \right],
\end{equation}
which are orthogonal over $\theta \in [0,\pi]$ with respect to the measure $\sin\theta\,d\theta$.

Combining these two components, we define $Y_\ell^m(\theta, \varphi) \triangleq \Theta_{\ell m}(\theta)\Phi_m(\varphi)$, where
\begin{equation}\label{eq:real_spherical_harmonics}
    Y_\ell^m(\theta, \varphi) =
    \begin{cases}
        P_\ell(\cos\theta), & m = 0, \\[0.4em]
        \sqrt{2}\,P_\ell^m(\cos\theta)\,\cos(m\varphi),
        & m = 1,2,\dots,\ell, \\[0.4em]
        \sqrt{2}\,P_\ell^{|m|}(\cos\theta)\,\sin(|m|\varphi),
        & m = -1,-2,\dots,-\ell.
    \end{cases}
\end{equation}

The functions $Y_\ell^m(\theta, \varphi)$ are the real solution to the angular component of Laplace’s equation in spherical coordinates \cref{eq:spherical_Laplace}. This family of functions are orthogonal over the surface of the sphere, which motivates the name \textbf{spherical harmonics}, as it reflects their mathematical and physical origin.

\subsection{Exterior General Solution}

The results of the previous subsections show that any solution of Laplace’s
equation in the exterior domain can be written as a superposition of separable
modes consisting of a radial power law and an angular spherical harmonic.
Specifically, combining the admissible exterior radial solution
\eqref{eq:radial_exterior} with the real spherical harmonics
\eqref{eq:real_spherical_harmonics}, the general exterior harmonic potential
takes the form
\begin{equation}\label{eq:general_ext_solution}
    U(r,\theta,\varphi)
    =
    \sum_{\ell=0}^{\infty}
    \sum_{m=-\ell}^{\ell}
    \frac{B_{\ell m}}{r^{\ell+1}}
    Y_\ell^m(\theta,\varphi).
\end{equation}

We may now derive the simplest gravitational potential and the basis for higher-order gravitational potential models, the Newtonian potential.

    \begin{theorem}\label{thm:Newtonian_potential}
        Assume that the main body is spherically symmetric with a uniform mass
        distribution. Then, for all $r \in \mathcal{S}$ the gravitational potential $U:\mathcal{S} \to \BBR$ is
        \begin{equation}
            U_{\rm N}(r) = \frac{GM}{r}.
        \end{equation}
    \end{theorem}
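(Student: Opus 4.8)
The plan is to combine the general exterior solution \cref{eq:general_ext_solution} with the rotational symmetry of the source, and then fix the single surviving integration constant via Gauss's law \ref{ass:Gauss_law}. (Only spherical symmetry of $\rho$ will actually be used; uniformity is immaterial for the exterior field.)

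First I would show that the potential is radially symmetric. The gravitational field transforms covariantly under rotations of the source about the center of mass $o_{\rm I}$, so the rotation-invariance of a spherically symmetric $\rho$ gives $F(Qp) = Q\,F(p)$ for every rotation $Q$ and all $p \in \mathcal{S}$. Applying \cref{eq:conservative_potential}, together with the decay of $U$ at infinity built into the exterior solution \cref{eq:general_ext_solution}, upgrades this to $U(Qp) = U(p)$ for all $Q$. Since the rotations act transitively on each sphere $\{\,p \in \mathcal{S} : \|p\|_2 = r\,\}$, the potential depends on $p$ only through $r$; equivalently, in spherical coordinates $U$ is independent of $\theta$ and $\varphi$.

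Next I would impose this on \cref{eq:general_ext_solution}. The real spherical harmonics $Y_\ell^m$ are linearly independent as functions on the sphere, and $Y_\ell^m$ is non-constant for every $\ell \ge 1$, whereas $Y_0^0 \equiv P_0(\cos\theta) = 1$. Hence, for $\sum_{\ell,m} B_{\ell m}\, r^{-(\ell+1)} Y_\ell^m(\theta,\varphi)$ to be $(\theta,\varphi)$-independent at each fixed $r$, all coefficients with $\ell \ge 1$ must vanish, leaving
\begin{equation*}
    U(r) = \frac{B_{00}}{r}
\end{equation*}
for a single constant $B_{00} \in \mathbb{R}$. (One can reach the same point without \cref{eq:general_ext_solution}: a radial $U$ reduces Laplace's equation \cref{eq:spherical_Laplace} to $\tfrac{1}{r^2}\tfrac{d}{dr}\!\big(r^2 U'(r)\big) = 0$, the $\ell=0$ instance of the radial ODE, whose only solution vanishing as $r \to \infty$ is $B_{00}/r$.)

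Finally I would determine $B_{00}$ from \ref{ass:Gauss_law}. Since $\mathcal{V}$ is compact it is enclosed by every sphere $\partial\mathcal{B}_r$ of large enough radius $r$, so the flux identity \cref{eq:Gauss_law_int} applies on $\partial\mathcal{B}_r$ with enclosed mass $M$. Substituting $F = -\nabla U$ with $U = B_{00}/r$ makes $F$ purely radial, and evaluating the flux over $\partial\mathcal{B}_r$ fixes $B_{00}$, yielding $U_{\rm N}(r) = GM/r$; uniqueness of the radial form then extends this to all of $\mathcal{S}$. The only step that requires genuine care is the symmetry reduction — rigorously deducing the vanishing of every $\ell \ge 1$ multipole coefficient from the rotational invariance of $\rho$; once $U = B_{00}/r$ is established, the evaluation of $B_{00}$ is a one-line flux computation.
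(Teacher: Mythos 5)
Your proposal follows the same backbone as the paper's own proof: both reduce the exterior expansion \cref{eq:general_ext_solution} by rotational invariance, observe that $Y_0^0$ is the only rotation-invariant harmonic so every $\ell\ge 1$ coefficient must vanish, and are left with a single $1/r$ mode. The genuine difference is the last step. The paper simply declares $C_1=0$ and $C_2=GM$, i.e.\ it identifies the surviving constant with the Newtonian value by fiat, whereas you fix $B_{00}$ by applying Gauss's law \ref{ass:Gauss_law} on a large sphere enclosing $\mathcal{V}$. That is a more self-contained argument (it also lets you drop the uniformity hypothesis and justify discarding the growing radial mode via decay at infinity rather than assumption), and your remark that the only delicate point is the symmetry reduction is accurate.

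One caveat you should not wave away as a ``one-line flux computation'': carried out under the paper's stated conventions it does not give the theorem's sign. With $U=B_{00}/r$ and $F=-\nabla U$ as in \cref{eq:conservative_potential}, one gets $F=(B_{00}/r^{2})\,\hat{r}$, so the flux through a sphere of radius $r$ is $4\pi B_{00}$; equating this to $-4\pi G M$ from \cref{eq:Gauss_law_int} yields $B_{00}=-GM$, i.e.\ $U=-GM/r$ rather than $+GM/r$. This is not a defect of your method but an inconsistency in the paper's conventions ($F=-\nabla U$ together with $U_{\rm N}=+GM/r$ gives a repulsive field that violates \cref{eq:Gauss_law_int}), which the paper's proof never exposes precisely because it assigns $C_2=GM$ without derivation. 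To make your argument close cleanly you must either flag this and adopt $F=+\nabla U$ (the usual astrodynamics convention) before the flux evaluation, or accept the sign $-GM/r$ that the paper's own axioms actually force; as written, your final sentence asserts $GM/r$ without performing the computation that would have revealed the discrepancy.
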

    
    \begin{proof}
        It follows from \cref{thm:Grav_Field_Laplace_Eq} that for all
        $r \in \mathcal{S} \setminus \mathcal{V}$ the gravitational potential
        satisfies Laplace’s equation,
        \begin{equation*}
            \nabla^2 U(r,\theta,\varphi) = 0 .
        \end{equation*}
        with the exterior general solution of Laplace’s equation in spherical coordinates \cref{eq:general_ext_solution}.

        Next, it follows from \cref{thm:Laplace_solution} that the general solution of Laplace's equation takes the form
        \begin{equation*}
            U(r,\theta,\varphi)
            =
            \sum_{\ell=0}^{\infty}
            \sum_{m=-\ell}^{\ell}
            \frac{B_{\ell m}}{r^{\ell+1}}
            Y_\ell^m(\theta,\varphi).
        \end{equation*}
        
        Since the mass distribution of the main body is spherically symmetric, the
        gravitational potential is invariant under arbitrary rotations. Consequently, the potential depends only on the radial coordinate $r$. Note that Spherical harmonics with $\ell \ge 1$ are not invariant under rotations, whereas the $\ell = 0$ harmonic,
        \begin{equation*}
            Y_0^0(\theta,\varphi) = 1,
        \end{equation*}
        is constant on the sphere. Therefore, rotational invariance implies that
        the solution reduces to the $\ell = 0$ term, which is reduced to the radial component. The radial component $R_0(r)$ thus satisfies
        \begin{equation}\label{eq:radial_ode_l0}
            \frac{1}{r^2}
            \frac{d}{dr}\!\left( r^2 \frac{dR_0(r)}{dr} \right) = 0.
        \end{equation}
        Integrating \cref{eq:radial_ode_l0} yields the general exterior radial solution
        \begin{equation*}
            R_0(r) = C_1 + \frac{C_2}{r}.
        \end{equation*}
        Let $C_1 = 0$ and $C_2 = GM$. Then, the Newtonian gravitational potential is
        \begin{equation*}
            U_{\rm N}(r) = \frac{GM}{r}.
        \end{equation*}
    \end{proof}

Note that the Newtonian potential is equivalently the monopole term $U_{0,0}(r) = U_{\rm N}(r)$. Substituting this result back into \eqref{eq:general_ext_solution}, we can isolate the monopole term from the higher-order contributions
\begin{equation}\label{eq:general_ext_solution_monopole}
    U(r,\theta,\varphi)
    =
    \frac{GM}{r}
    +
    \sum_{\ell=1}^{\infty}
    \sum_{m=-\ell}^{\ell}
    \frac{B_{\ell m}}{r^{\ell+1}}
    Y_\ell^m(\theta,\varphi),
\end{equation}
which clearly distinguishes the dominant point-mass potential from the multipole contributions that encode the body's deviations from spherical symmetry. Note that each term of degree $\ell$ decays as $r^{-(\ell+1)}$, so higher-degree harmonics have diminishing influence with distance. Which motivates the study of higher-order gravitational effects as pertubations to the Newtonian model as done in some of the classic textbooks \cite[Chapter 12.5]{curtis_orbital_2021}, \cite[Chapter 11]{schaub_analytical_2018} or \cite[Chapter 8.6]{vallado_fundamentals_2022}.

\section{Computational Methods for the Gravitational Potential}

Despite its expressiveness, the general solution~\cref{eq:general_ext_solution_monopole} is often challenging to compute for several reasons. First, the coefficients $B_{\ell m}$ span multiple orders of magnitude, even for low-degree and low-order harmonics, making their relative contributions difficult to assess. Second, although the harmonic coefficients are physically meaningful, they are difficult to determine experimentally with sufficient accuracy. Third, the computation of spherical harmonics relies on recursive relations involving derivatives of lower-order terms, which can lead to numerical instability and increased computational cost. In this section, we present a set of simplifications and computational strategies designed to address each of these challenges.

\subsection{Exterior Geometric Solution}

    For the higher-order multipole terms with $\ell \ge 1$, the coefficients 
    $B_{\ell m}$ carry physical dimensions that scale with $r^\ell$, 
    which complicates comparison across degrees and obscures their relative 
    contributions. To address this, we introduce a reference length scale $R>0$, 
    chosen such that the entire mass distribution is contained within a sphere of 
    radius $R$ centered at the origin. In planetary applications, $R$ is typically taken to be the equatorial radius of the body. Then, define the normalized coefficient $B_{\ell m}$ as
    \begin{equation}\label{eq:Blm_scaling}
        \bar{B}_{\ell m} \triangleq \frac{B_{\ell m}}{GM\,R^{\ell}},
    \end{equation}
    where $\bar B_{\ell m}\in\mathbb{R}$ are dimensionless constants. This formulation facilitates both physical interpretation and computational modeling of the exterior gravitational field. Substituting \eqref{eq:Blm_scaling} into \eqref{eq:general_ext_solution_monopole} yields
    \begin{equation}\label{eq:dimensionless_ext_solution}
        U(r,\theta,\varphi)
        =
        \frac{GM}{r} \left [ 1 +
        \sum_{\ell=1}^{\infty}
        \sum_{m=-\ell}^{\ell}
        \bar B_{\ell m}
        \left( \frac{R}{r} \right)^{\ell}
        Y_\ell^m(\theta,\varphi) \right ],
    \end{equation}
    
    Note that \cref{eq:dimensionless_ext_solution} is a geometric series with ratio $R/r$, which implies that the series converges if and only if $r>R$. This property defines the \emph{Brillouin sphere} condition (see \cref{fig:Brillouin}) and establishes the domain of validity for the exterior spherical harmonic expansion.

    \begin{figure}
        \centering
        \includegraphics[width=0.3\linewidth]{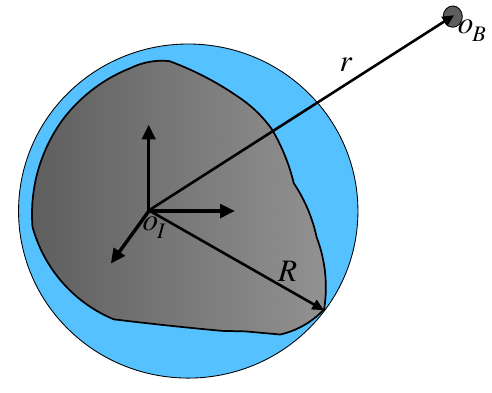}
        \caption{Brillouin sphere around the main body given by the normalizing parameter $R$. The series in \cref{eq:dimensionless_ext_solution} converges outside the blue sphere.}
        \label{fig:Brillouin}
    \end{figure}
    
    We express the dimensionless coefficients $\bar B_{\ell m}$ in terms of two independent real coefficients, $C_{\ell m}$ and $S_{\ell m}$, which weight the even (cosine) and odd (sine) azimuthal components, respectively
    \begin{equation*}
        \bar B_{\ell m}
        \;\longleftrightarrow\;
        \begin{cases}
            C_{\ell m}, & m = 0,1,2,\dots,\ell, \\[0.3em]
            S_{\ell |m|}, & m = -1,-2,\dots,-\ell.
        \end{cases}
    \end{equation*}
    The real even and odd spherical harmonics corresponding to these
    coefficients are defined as
    \begin{equation}
        Y_{\ell m}^e(\theta,\varphi)
        \;\triangleq\;
        P_\ell^m(\cos\theta)\cos(m\varphi),
        \qquad
        Y_{\ell m}^o(\theta,\varphi)
        \;\triangleq\;
        P_\ell^m(\cos\theta)\sin(m\varphi),
    \end{equation}
    
    Substituting this representation into \eqref{eq:dimensionless_ext_solution},
    the exterior gravitational potential takes the form
    \begin{equation}\label{eq:gravity_field_expansion_final}
        U(r,\theta,\varphi)
        =
        \frac{GM}{r}
        \left[
            1
            +
            \sum_{\ell=1}^{\infty}
            \sum_{m=0}^{\ell}
            \left( \frac{R}{r} \right)^{\ell}
            \left(
                C_{\ell m}\,Y_{\ell m}^e(\theta,\varphi)
                +
                S_{\ell m}\,Y_{\ell m}^o(\theta,\varphi)
            \right)
        \right].
    \end{equation}

\subsection{Recursion Methods for Legendre Polynomials}

The direct computation of the associated Legendre polynomials $P_\ell^m(u)$ becomes computationally expensive for large degree $\ell$ and order $m$, since they are defined through repeated differentiation of lower-order terms as in \cref{eq:polar_solution}. To obtain a numerically efficient and stable evaluation, we employ a set of recursion relations derived in \cite{cunningham_computation_1970}. Let $u \in [-1,1]$ denote the argument of the associated Legendre polynomials (e.g., $u=\sin\theta$ or $u=\cos\theta$, depending on parity). Then, the recursion starts from the base value
\begin{equation}
    P_0^0(u) = 1 .
\end{equation}

For all terms such that $m \ge 1$ and $\ell = m$ (i.e., the diagonal elements), the associated Legendre polynomials satisfy the recursion
\begin{equation}
    P_m^m(u)
    =
    (2m-1)\sqrt{1-u^2}\, P_{m-1}^{m-1}(u).
    \label{eq:legendre_diag_recursion}
\end{equation}

Once the diagonal elements are known, the first off-diagonal terms
($\ell = m+1$) follow from
\begin{equation}
    P_{m+1}^m(u)
    =
    (2m+1)\,u\, P_m^m(u).
    \label{eq:legendre_first_off_diag}
\end{equation}

For degrees $\ell > m+1$, the associated Legendre polynomials satisfy the
three-term recursion
\begin{equation}
    P_\ell^m(u)
    =
    \frac{1}{\ell - m}
    \left[
        (2\ell-1)\,u\, P_{\ell-1}^m(u)
        -
        (\ell+m-1)\, P_{\ell-2}^m(u)
    \right].
    \label{eq:legendre_general_recursion}
\end{equation}

Together, \cref{eq:legendre_diag_recursion,eq:legendre_first_off_diag,eq:legendre_general_recursion} allow the systematic computation of all $P_\ell^m(u)$ with $0 \le m \le \ell \le \ell_{\max}$ using only algebraic operations. These recursions are known to be numerically stable, since either the degree $\ell$ or the order $m$ is held fixed at each step.

\subsection{Computing the Harmonic Coefficients}

The gravitational potential expressed in \cref{eq:gravity_field_expansion_final}, while capable of modeling higher-order gravitational effects, relies on prior knowledge of the harmonic coefficients $C_{\ell m}$ and $S_{\ell m}$ of the central body. These coefficients are body-specific, as they depend on the physical properties of the object, including its geometry and internal mass distribution. Several methods exist for determining the harmonic coefficients, depending on the type and availability of observational data. For a brief review of Earth's geopotential measurement methods, see \cite{sanso_combination_1989}. The most common approach consists of analyzing the motion of spacecraft orbiting the main body and characterizing the perturbations induced by deviations from a purely monopole gravitational field. In this framework, the higher-order terms of \cref{eq:gravity_field_expansion_final} are used as a regression model to fit the observed orbital perturbations and estimate the corresponding harmonic coefficients.

A major drawback of this method is that it requires precise tracking of controlled orbital maneuvers around the central body, which may be impractical or infeasible for objects that are difficult to access, such as small asteroids. In such cases, an alternative approach relies on visual and spectral observations to reconstruct the surface geometry and infer material properties of the body. These data are then used to build a polyhedral shape model, which provides a high-fidelity, closed-form approximation of the gravitational potential. Synthetic spacecraft trajectories are subsequently generated within this gravitational field, and a regression procedure analogous to the one described above is employed to recover the harmonic coefficients \cite{werner_spherical_1997,penarroya_orbit_2019}.

As an alternative to data-driven fitting approaches, one may introduce simplifying assumptions on the geometry and density distribution of the main body in order to derive closed-form expressions for the harmonic coefficients. This is precisely what ellipsoid models do. Assume that the central body has a uniform density distribution and an ellipsoidal geometry with semi-major axes satisfying $0 < c \leq b \leq a$. The harmonic coefficients for this type of geometry exhibits a highly sparse structure. In particular, all sine coefficients vanish, $S_{\ell m} = 0$, and all cosine coefficients with either odd degree $\ell$ or odd order $m$ are identically zero, $C_{\ell m} = 0$. As a result, the only nonzero coefficients are of the form $C_{2\ell,\,2m}$ where $\ell,m \in \mathbb{Z}_{\ge 0}$. Closed-form expressions for these coefficients are presented in \cite{balmino_gravitational_1994}. Let $R_0 \triangleq \left( \frac{3}{a + b + c} \right)^{\frac{1}{2}}$, then the first five nonzero harmonic coefficients are
\begin{align*}
    C_{2,0} &= \frac{1}{5R_0^2} \left( c^2 - \frac{a^2 + b^2}{2} \right),\\
    C_{2,2} &= \frac{1}{20R_0^2} \left( a^2 - b^2 \right), \\
    C_{4,0} &= \frac{15}{7} \left( C_{2,0}^2 + 2 C_{2,2}^2 \right), \\
    C_{4,2} &= \frac{5}{7} C_{2,0} C_{2,2}, \\
    C_{4,4} &= \frac{5}{28} C_{2,2}^2.
\end{align*}

\subsection{Spherical Harmonics in Cartesian Coordinates}

This section introduces a reparametrization of \cref{eq:gravity_field_expansion_final} in cartesian coordinates, including recurrence relations necessary for the Legendre polynomials in cartesian coordinates and an expression for the gradient of the potential (i.e. the gravitational acceleration) based on partial accelerations. This section follows \cite[Chapter 3.2]{montenbruck_satellite_2000} and the main results presented here were derived in \cite{cunningham_computation_1970}. Note that in the formulation of \cite{cunningham_computation_1970,montenbruck_satellite_2000}, the associated Legendre polynomials $P_{\ell}^m(\cos \theta)$ are written as functions of $\sin\phi$. These conventions are equivalent, since $\cos\theta = \sin\phi$. Thus, the use of $P_{\ell}^m(\sin\phi)$ does not alter the angular dependence of the spherical harmonics, but reflects a coordinate choice adapted to Cartesian representations.

In order to facilitate efficient evaluation of the gravitational potential and its derivatives in Cartesian coordinates, consider the solid spherical harmonics $V_{\ell m}$ and $W_{\ell m}$, which combine radial and angular dependencies into homogeneous functions of degree $-(\ell+1)$. They are defined as
\begin{align}
    V_{\ell m}(r,\theta,\varphi)
    &\triangleq
    \left( \frac{R}{r} \right)^{\ell+1}
    P_\ell^m(\cos\theta)\cos(m\varphi),\\
    W_{\ell m}(r,\theta,\varphi)
    &\triangleq
    \left( \frac{R}{r} \right)^{\ell+1}
    P_\ell^m(\cos\theta)\sin(m\varphi),
\end{align}
Note that $V_{0,0} = \frac{R}{r}$ and $W_{0,0} = 0$. The solid spherical harmonics satisfy the following recurrence relations. For $m>0$ and $\ell=m$,
\begin{align}
    V_{m,m}(p)
    &=
    (2m-1)
    \left(
        \frac{xR}{r^2} V_{m-1,m-1}
        -
        \frac{yR}{r^2} W_{m-1,m-1}
    \right), \label{eq:V_mm}
    \\
    W_{m,m}(p)
    &=
    (2m-1)
    \left(
        \frac{xR}{r^2} W_{m-1,m-1}
        +
        \frac{yR}{r^2} V_{m-1,m-1}
    \right). \label{eq:W_mm}
\end{align}
For $\ell \ge m+1$, the general recursion is given by
\begin{align}
    V_{\ell m}(p)
    &=
    \frac{2\ell-1}{\ell-m}
    \frac{zR}{r^2}
    V_{\ell-1,m}
    -
    \frac{\ell+m-1}{\ell-m}
    \left( \frac{R}{r} \right)^2
    V_{\ell-2,m}, \label{eq:V_lm}
    \\
    W_{\ell m}(p)
    &=
    \frac{2\ell-1}{\ell-m}
    \frac{zR}{r^2}
    W_{\ell-1,m}
    -
    \frac{\ell+m-1}{\ell-m}
    \left( \frac{R}{r} \right)^2
    W_{\ell-2,m}. \label{eq:W_lm}
\end{align}

These solid spherical harmonics recursion use the Cartesian parametrization of the position. Which allows us to formulate the gravitational potential \cref{eq:gravity_field_expansion_final} in Cartesian coordinates as
\begin{equation}
    U(p)
    =
    \frac{GM}{R}
    \sum_{\ell=0}^{\infty}
    \sum_{m=0}^{\ell}
    \Big(
        C_{\ell m}\, V_{\ell m}(p)
        +
        S_{\ell m}\, W_{\ell m}(p)
    \Big),
    \label{eq:cartesian_gravity_potential}
\end{equation}

Finally, the gravitational acceleration can be directly calculated from \cref{eq:conservative_potential}, which can be expressed as a superposition of partial accelerations associated with each coefficient,
\begin{equation}
    F_x = \sum_{\ell=0}^{\infty}\sum_{m=0}^{\ell} F_{x_{\ell m}},
    \qquad
    F_y = \sum_{\ell=0}^{\infty}\sum_{m=0}^{\ell} F_{y_{\ell m}},
    \qquad
    F_z = \sum_{\ell=0}^{\infty}\sum_{m=0}^{\ell} F_{z_{\ell m}}.
    \label{eq:acceleration_decomposition}
\end{equation}

For the zonal coefficients ($m=0$), the partial accelerations reduce to
\begin{align}
    F_{x_{\ell 0}}
    &=
    -\,\frac{GM}{R^2}\, C_{\ell 0}\, V_{\ell+1,1},
    \\
    F_{y_{\ell 0}}
    &=
    -\,\frac{GM}{R^2}\, C_{\ell 0}\, W_{\ell+1,1},
    \\
    F_{z_{\ell 0}}
    &=
    \frac{GM}{R^2}\,(\ell+1)
    \left(
        - C_{\ell 0}\, V_{\ell+1,0}
    \right).
\end{align}

For Tesseral and sectorial terms $m>0$, the partial accelerations are given by
\begin{align}
    F_{x_{\ell m}}
    &=
    \frac{GM}{R^2}\,\frac{1}{2}
    \Bigg[
        \big(
            -C_{\ell m} V_{\ell+1,m+1}
            -S_{\ell m} W_{\ell+1,m+1}
        \big)
        \nonumber\\
        &\hspace{2.5cm}
        +\frac{(\ell-m+2)!}{(\ell-m)!}
        \big(
            +C_{\ell m} V_{\ell+1,m-1}
            +S_{\ell m} W_{\ell+1,m-1}
        \big)
    \Bigg],
    \label{eq:xddot_tesseral}
    \\
    F_{y_{\ell m}}
    &=
    \frac{GM}{R^2}\,\frac{1}{2}
    \Bigg[
        \big(
            -C_{\ell m} W_{\ell+1,m+1}
            +S_{\ell m} V_{\ell+1,m+1}
        \big)
        \nonumber\\
        &\hspace{2.5cm}
        +\frac{(\ell-m+2)!}{(\ell-m)!}
        \big(
            -C_{\ell m} W_{\ell+1,m-1}
            +S_{\ell m} V_{\ell+1,m-1}
        \big)
    \Bigg],
    \label{eq:yddot_tesseral}
    \\
    F_{z_{\ell m}}
    &=
    \frac{GM}{R^2}\,(\ell-m+1)
    \big(
        -C_{\ell m} V_{\ell+1,m}
        -S_{\ell m} W_{\ell+1,m}
    \big).
    \label{eq:zddot_tesseral}
\end{align}

The following algorithm summarizes the steps to compute the gravitational potential and the gravitational acceleration exerted by the main body whose harmonic coefficients $\{C_{\ell m}, S_{\ell m}\}$ are known to a degree $\ell_{\max}$ for a given position in cartesian coordinates.

\begin{algorithm}[H]
\DontPrintSemicolon
\caption{High-Order Gravitational Potential and Acceleration in Cartesian Coordinates}
\label{alg:cartesian_gravity_2e}

\KwIn{Position $(x,y,z)$, reference radius $R$, harmonic coefficients $\{C_{\ell m}, S_{\ell m}\}$ for $0 \le m \le \ell \le \ell_{\max}$}
\KwOut{Gravitational potential $U$ and acceleration $F(r) = (F_x,F_y,F_z)$}

Compute $r \gets \sqrt{x^2 + y^2 + z^2}$\;

Initialize $U \gets 0$, $F_x \gets 0$, $F_y \gets 0$, $F_z \gets 0$\;

Initialization of solid spherical harmonics: $V_{0,0} \gets R / r$, $W_{0,0} \gets 0$\;

\For{$m \gets 1$ \KwTo $\ell_{\max}+1$}{
    Compute $V_{m,m}$ using \cref{eq:V_mm} and $W_{m,m}$ using \cref{eq:W_mm}.\;
}

\For{$m \gets 0$ \KwTo $\ell_{\max}+1$}{
    \For{$\ell \gets m+1$ \KwTo $\ell_{\max}+1$}{
        Compute $V_{\ell m}$ using \cref{eq:V_lm} and $W_{\ell m}$ using \cref{eq:W_lm}.\;
    }
}

Compute gravitational potential $U$ using \cref{eq:cartesian_gravity_potential}.

\tcp{\textbf{Compute gravitational acceleration:}}
\For{$\ell \gets 0$ \KwTo $\ell_{\max}$}{
    \tcp{Zonal terms $m=0$}
    Update $(F_x,F_y,F_z)$ using Eqs.~\cref{eq:xddot_tesseral}--\cref{eq:zddot_tesseral} with $m=0$\;

    \For{$m \gets 1$ \KwTo $\ell$}{
        \tcp{Tesseral and sectorial terms $m>0$}
        Update $(F_x,F_y,F_z)$ using Eqs.~\cref{eq:xddot_tesseral}--\cref{eq:zddot_tesseral}\;
    }
}

\Return $U$, $(F_x,F_y,F_z)$\;

\end{algorithm}

\newpage
\section{Studying Orbits of Highly Perturbed Systems}

In this final section, we present two examples where the study of higher-order gravitational effects is particularly significant.

To facilitate these studies, we developed a Python library \textbf{ARC-Grav}\footnote{Source code available at \url{https://github.com/FelipeArenasUribe/ARC-Grav}.}, which enables flexible and efficient modeling of a body's gravitational field. The library is structured around a concept of \textit{main bodies} composed of one or more \textit{sub-bodies}, which can be spherical, ellipsoidal, or uniform. For spherical and ellipsoidal sub-bodies, the library requires only the physical parameters of the body, from which the normalized harmonic coefficients are automatically computed. For uniform sub-bodies, the normalized harmonic coefficients must be provided explicitly. This modular approach allows users to easily create environments with main bodies composed of multiple sub-bodies, making it possible to model real and complex gravitational systems with minimal setup.

\subsection{Low Earth Orbit}

Low Earth Orbit (LEO) provides a particularly sensitive environment for studying higher--order gravitational effects of the Earth. The majority of artificial satellites currently operate in orbits ranging from LEO to geostationary Earth orbit (GEO), at altitudes between $200-500$ kilometers. These satellites are influenced not only by the dominant central gravitational attraction, but also by deviations from spherical symmetry arising from the Earth’s oblateness and heterogeneous mass distribution. These perturbations accumulate over time and lead to deviations in orbital motion. As a result, satellites in these regimes are required to perform periodic station--keeping maneuvers in order to maintain their prescribed orbital parameters. Consequently, understanding how satellite orbits are perturbed by high--order gravitational effects is of significant practical and scientific interest.

In this example, we simulate the motion of a satellite in Low Earth Orbit (LEO) using two gravitational models: a Newtonian model of the Earth and a higher--order model based on spherical harmonics from the Joint Gravity Model JGM-3 \cite{tapley_joint_1996}. The Earth is modeled with mass $M_E = 5.972\times10^{24}\,\mathrm{kg}$ and mean radius $R_E = 6378.1363\,\mathrm{km}$. Three circular LEO trajectories at an altitude of $500\,\mathrm{km}$ are considered: an equatorial orbit ($i=0^\circ$), a polar orbit ($i=90^\circ$), and an inclined orbit ($i=45^\circ$). Each orbit is propagated for ten orbital periods using a fixed time step of $\Delta t = 10\,\mathrm{s}$.

\begin{figure}[ht]
    \centering
    \includegraphics[width=0.5\linewidth]{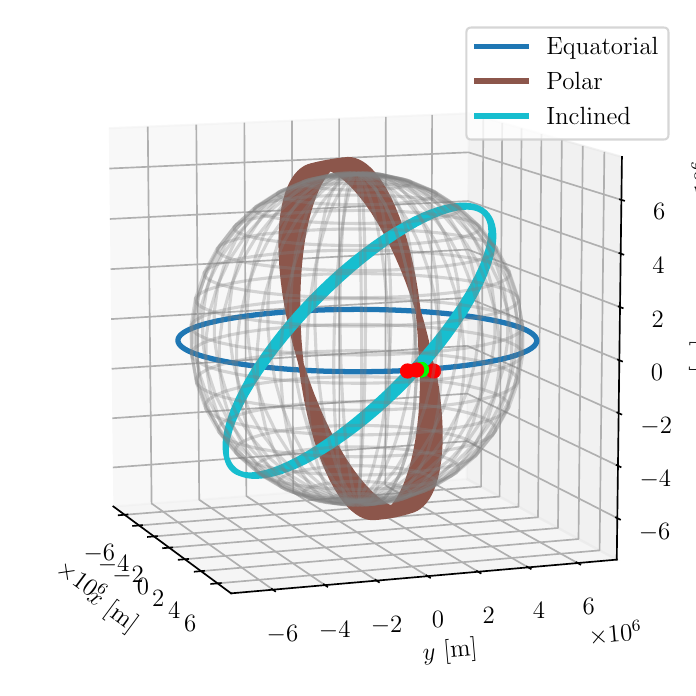}
    \caption{LEO trajectories with JGM-3 Higher-Order model.}
    \label{fig:LEO_trajectories}
\end{figure}

As shown in \cref{fig:Equatorial_LEO}, the equatorial orbit follows essentially the same trajectory in the $xy$--plane when modeled using both the Newtonian and higher--order gravitational formulations. However, the inclusion of higher--order effects introduces a bounded, oscillatory perturbation along the $z$--axis. This behavior becomes more pronounced for the polar orbit shown in \cref{fig:Polar_LEO}, where the perturbation along the $y$--axis is not only oscillatory but also unbounded. As a consequence, \cref{fig:LEO_trajectories} shows that the polar orbit experiences a gradual reduction in its inclination angle, decreasing to approximately $83^\circ$ after ten orbital periods. Finally, the inclined orbit presented in \cref{fig:Inclined_LEO} exhibits comparatively milder perturbations: the trajectory along each axis shows deviations of less than $1000~\mathrm{m}$, which nonetheless accumulate over time, leading to changes in the orbital parameters, including a reduction in the orbital radius and a slight perturbation of the orbital inclination.

\begin{figure}[ht]
    \centering

    \begin{subfigure}[b]{0.30\linewidth}
        \centering
        \includegraphics[width=\linewidth]{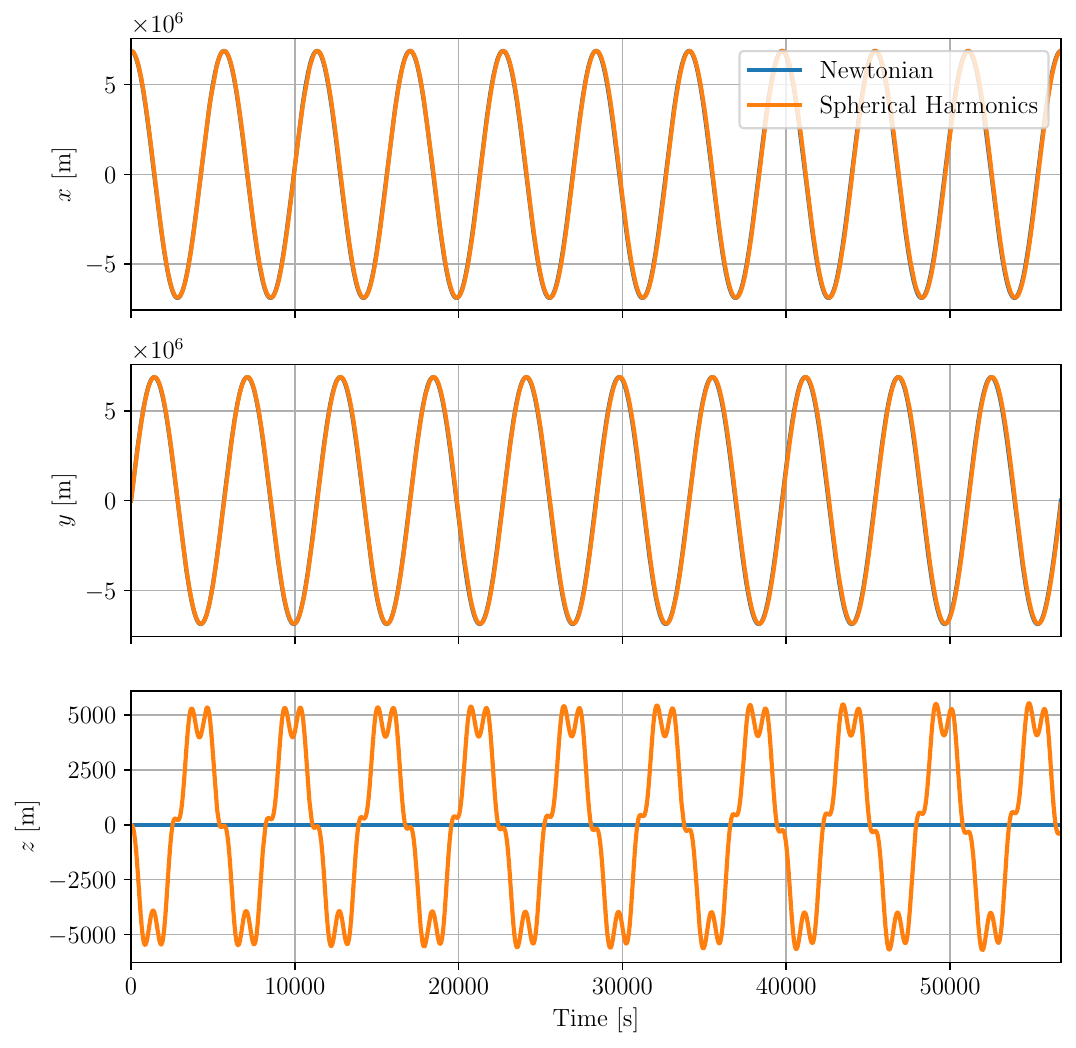}
        \caption{Equatorial LEO}
        \label{fig:Equatorial_LEO}
    \end{subfigure}
    \hfill
    \begin{subfigure}[b]{0.30\linewidth}
        \centering
        \includegraphics[width=\linewidth]{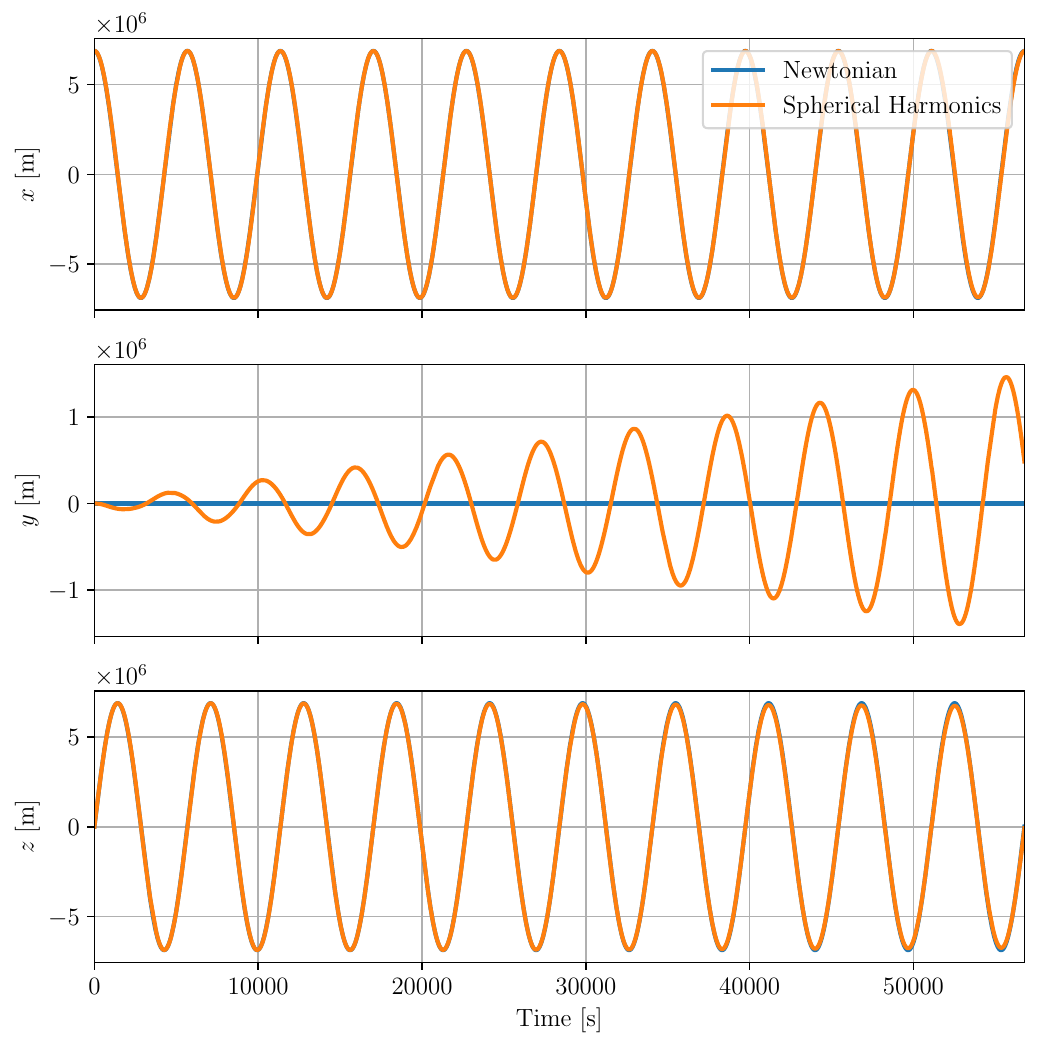}
        \caption{Polar LEO}
        \label{fig:Polar_LEO}
    \end{subfigure}
    \hfill
    \begin{subfigure}[b]{0.30\linewidth}
        \centering
        \includegraphics[width=\linewidth]{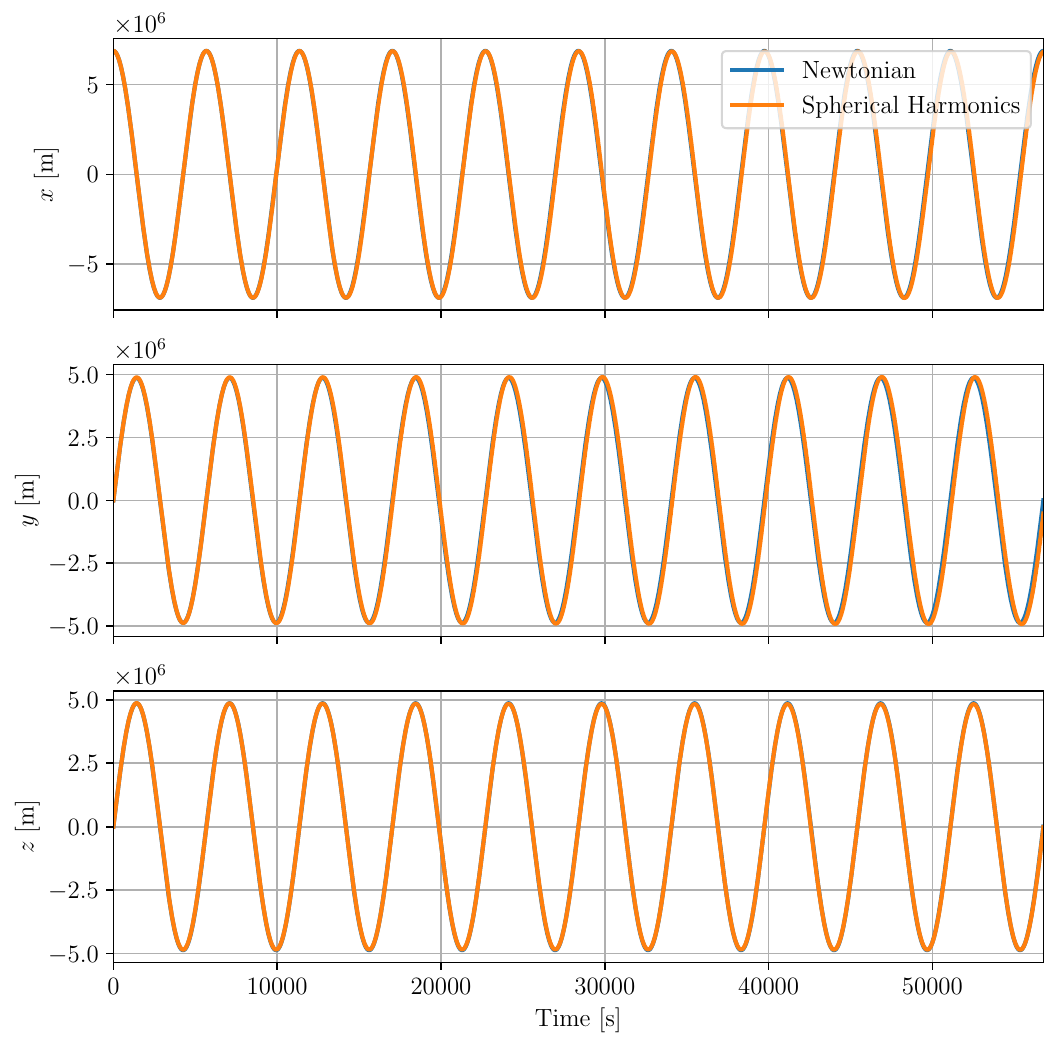}
        \caption{Inclined LEO}
        \label{fig:Inclined_LEO}
    \end{subfigure}

    \caption{Comparison of equatorial, polar, and inclined LEO trajectories under Newtonian and Spherical Harmonic gravitational models.}
    \label{fig:LEO_Comparison}
\end{figure}

\subsection{Irregular Asteroid Modeling}

Small celestial bodies such as asteroids present both scientific and engineering challenges due to their irregular shapes and often non-uniform density distributions. These characteristics result in highly complex gravitational fields, which complicate orbit design, proximity operations, and landing maneuvers. In this example, we develop an analytical model of an irregular asteroid with non-uniform density by representing it as a composition of multiple ellipsoidal sub-bodies. Each ellipsoid is defined by its semi-major axes, uniform density, position, and orientation with respect to the inertial frame. The gravitational field of the asteroid is computed as the superposition of the gravitational contributions of all sub-bodies. This approach enables the construction of flexible and computationally efficient gravitational models for irregular bodies. Although not intended to reproduce a specific asteroid, the method is well suited for generating realistic test cases for orbit design and for training and evaluating control strategies for operations around irregular celestial bodies.

For this example, we define three ellipsoidal components with different sizes and orientations. The first ellipsoid, located at the origin, has dimensions $800~\mathrm{m} \times 300~\mathrm{m} \times 300~\mathrm{m}$ and a density of $1500~\mathrm{kg/m^3}$. The second ellipsoid is offset to $[300, -200, 0]~\mathrm{m}$ and rotated by $45^\circ$ around the $z$-axis, with dimensions $600~\mathrm{m} \times 400~\mathrm{m} \times 200~\mathrm{m}$ and the same density. The third ellipsoid is positioned at $[-250, 150, 100]~\mathrm{m}$, with a rotation of $30^\circ$ about the $y$-axis, and dimensions $500~\mathrm{m} \times 250~\mathrm{m} \times 350~\mathrm{m}$.

\begin{figure}[htbp]
    \centering

    \begin{subfigure}[b]{0.45\linewidth}
        \centering
        \includegraphics[width=\linewidth]{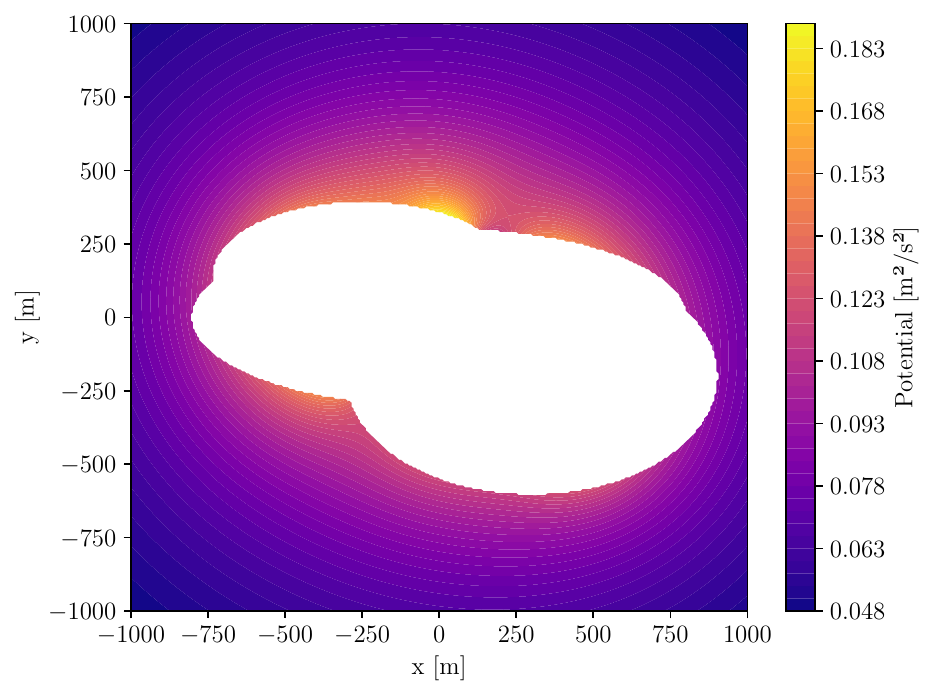}
        \caption{Gravitational Potential of the asteroid.}
        \label{fig:Potential}
    \end{subfigure}
    \hfill
    \begin{subfigure}[b]{0.4\linewidth}
        \centering
        \includegraphics[width=\linewidth]{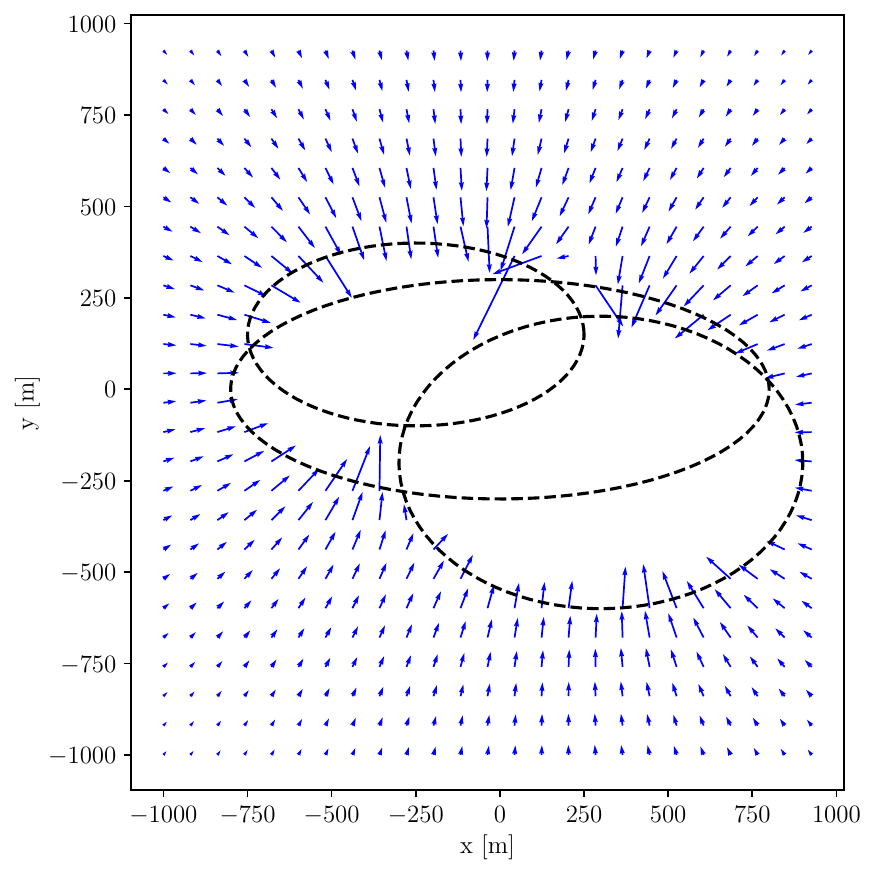}
        \caption{Gravitational field vector.}
        \label{fig:Vector_field}
    \end{subfigure}

    \caption{Properties of the gravitational field of the custom irregularly shaped asteroid with non-uniform density distribution.}
    \label{fig:Asteroid_Grav_field}
\end{figure}

\Cref{fig:Potential} shows the gravitational potential in the $XY$ plane. As expected for an irregularly shaped body composed of multiple ellipsoidal sub-bodies, the potential is highly non-uniform. These variations highlight how the superposition of the sub-bodies creates complex gravitational features that deviate significantly from a simple point-mass or homogeneous ellipsoid approximation. The gravitational acceleration vector field \cref{fig:Vector_field}, provides further insight into the directional nature of the gravitational forces acting on nearby objects. The field vectors exhibit significant asymmetry due to the non-uniform mass distribution and irregular geometry of the asteroid. This non-uniformity can significantly affect landing maneuvers, where errors in the gravitational models diverge from the real gravitational effects.

\cref{fig:asteroid_trajectory} shows the trajectory of a test body initialized at $r_0 = [1200, 0, 0]~\mathrm{m}$ with a low initial velocity strictly in the y-axis. The orbit demonstrates stable behavior around the asteroid despite the irregular gravitational field, indicating that the combined gravitational effect of the ellipsoids produces regions where quasi-stable orbits are feasible. Small deviations from circularity and perturbations in the trajectory reflect the influence of the asteroid's irregular mass distribution. 

\begin{figure}[ht]
    \centering
    \includegraphics[width=0.5\linewidth]{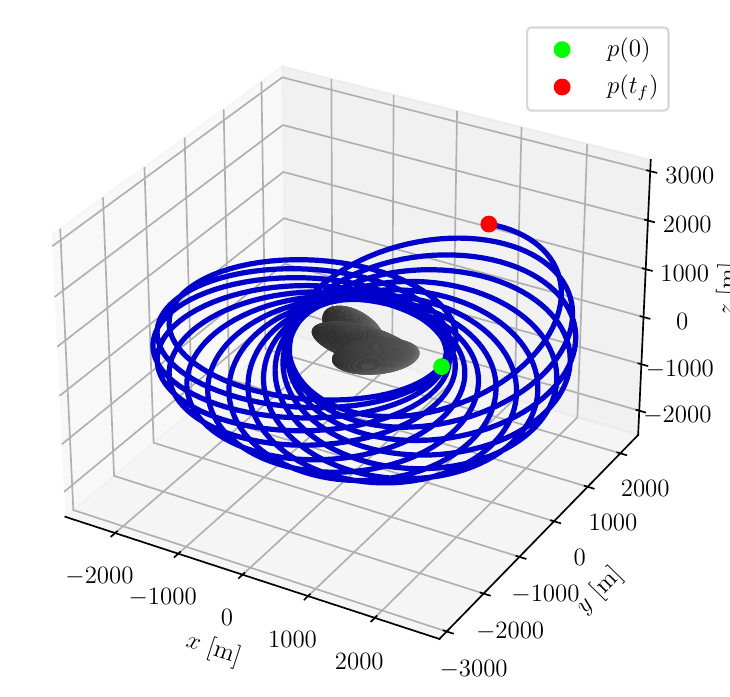}
    \caption{Stable trajectory about the custom asteroid.}
    \label{fig:asteroid_trajectory}
\end{figure}

\newpage
\bibliographystyle{elsarticle-num} 
\bibliography{references}

@book{montenbruck_satellite_2000,
	title = {Satellite orbits: models, methods, and applications},
	isbn = {9783540672807},
	shorttitle = {Satellite orbits},
	language = {eng},
	publisher = {Springer},
	author = {Montenbruck, Oliver and Gill, Eberhard},
	year = {2000},
}

@article{tapley_joint_1996,
	title = {The {Joint} {Gravity} {Model} 3},
	volume = {101},
	copyright = {http://onlinelibrary.wiley.com/termsAndConditions\#vor},
	issn = {0148-0227},
	url = {https://agupubs.onlinelibrary.wiley.com/doi/10.1029/96JB01645},
	doi = {10.1029/96JB01645},
	abstract = {An improved Earth geopotential model, complete to spherical harmonic degree and order 70, has been determined by combining the Joint Gravity Model 1 (JGM 1) geopotential coefficients, and their associated error covariance, with new information from SLR, DORIS, and GPS tracking of TOPEX/Poseidon, laser tracking of LAGEOS 1, LAGEOS 2, and Stella, and additional DORIS tracking of SPOT 2. The resulting field, JGM 3, which has been adopted for the TOPEX/Poseidon altimeter data rerelease, yields improved orbit accuracies as demonstrated by better fits to withheld tracking data and substantially reduced geographically correlated orbit error. Methods for analyzing the performance of the gravity field using high‐precision tracking station positioning were applied. Geodetic results, including station coordinates and Earth orientation parameters, are significantly improved with the JGM 3 model. Sea surface topography solutions from TOPEX/Poseidon altimetry indicate that the ocean geoid has been improved. Subset solutions performed by withholding either the GPS data or the SLR/DORIS data were computed to demonstrate the effect of these particular data sets on the gravity model used for TOPEX/Poseidon orbit determination.},
	language = {en},
	number = {B12},
	urldate = {2026-01-23},
	journal = {Journal of Geophysical Research: Solid Earth},
	author = {Tapley, B. D. and Watkins, M. M. and Ries, J. C. and Davis, G. W. and Eanes, R. J. and Poole, S. R. and Rim, H. J. and Schutz, B. E. and Shum, C. K. and Nerem, R. S. and Lerch, F. J. and Marshall, J. A. and Klosko, S. M. and Pavlis, N. K. and Williamson, R. G.},
	month = dec,
	year = {1996},
	pages = {28029--28049},
}

@incollection{sanso_combination_1989,
	address = {Berlin/Heidelberg},
	title = {Combination of {Satellite}, {Altimetric} and {Terrestrial} {Gravity} {Data}},
	volume = {25},
	isbn = {9783540515289},
	url = {http://link.springer.com/10.1007/BFb0010554},
	language = {en},
	urldate = {2026-01-23},
	booktitle = {Theory of {Satellite} {Geodesy} and {Gravity} {Field} {Determination}},
	publisher = {Springer-Verlag},
	author = {Rapp, Richard H.},
	editor = {Sansò, Fernando and Rummel, Reiner},
	year = {1989},
	doi = {10.1007/BFb0010554},
	pages = {261--284},
}

@article{cunningham_computation_1970,
	title = {On the computation of the spherical harmonic terms needed during the numerical integration of the orbital motion of an artificial satellite},
	volume = {2},
	copyright = {http://www.springer.com/tdm},
	issn = {0008-8714, 1572-9478},
	url = {http://link.springer.com/10.1007/BF01229495},
	doi = {10.1007/BF01229495},
	language = {en},
	number = {2},
	urldate = {2026-01-16},
	journal = {Celestial Mechanics},
	author = {Cunningham, Leland E.},
	month = jun,
	year = {1970},
	pages = {207--216},
}

@article{balmino_gravitational_1994,
	title = {Gravitational potential harmonics from the shape of an homogeneous body},
	volume = {60},
	copyright = {http://www.springer.com/tdm},
	issn = {0923-2958, 1572-9478},
	url = {http://link.springer.com/10.1007/BF00691901},
	doi = {10.1007/BF00691901},
	language = {en},
	number = {3},
	urldate = {2026-01-16},
	journal = {Celestial Mechanics \& Dynamical Astronomy},
	author = {Balmino, Georges},
	month = nov,
	year = {1994},
	pages = {331--364},
}

@article{werner_spherical_1997,
	title = {Spherical harmonic coefficients for the potential of a constant-density polyhedron},
	volume = {23},
	copyright = {https://www.elsevier.com/tdm/userlicense/1.0/},
	issn = {00983004},
	url = {https://linkinghub.elsevier.com/retrieve/pii/S0098300497001106},
	doi = {10.1016/S0098-3004(97)00110-6},
	language = {en},
	number = {10},
	urldate = {2026-01-16},
	journal = {Computers \& Geosciences},
	author = {Werner, Robert A.},
	month = dec,
	year = {1997},
	pages = {1071--1077},
}

@article{penarroya_orbit_2019,
	title = {Orbit propagation around small bodies using spherical harmonic coefficients obtained from polyhedron shape models},
	volume = {15},
	copyright = {https://www.cambridge.org/core/terms},
	issn = {1743-9213, 1743-9221},
	url = {https://www.cambridge.org/core/product/identifier/S1743921321001496/type/journal_article},
	doi = {10.1017/S1743921321001496},
	abstract = {Abstract 
            Missions to asteroids have been the trend in space exploration for the last years. They provide information about the formation and evolution of the Solar System, contribute to direct planetary defense tasks, and could be potentially exploited for resource mining. Be their purpose as it may, the factor that all these mission types have in common is the challenging dynamical environment they have to deal with. The gravitational environment of a certain asteroid is most of the times not accurately known until very late mission phases when the spacecraft has already orbited the body for some time. 
            Shape models help to estimate the gravitational potential with a density distribution assumption (usually constant value) and some optical measurements of the body. These measurements, unlike the ones needed for harmonic coefficient estimation, can be taken from well before arriving at the asteroid’s sphere of influence, which allows to obtain a better approximation of the gravitational dynamics much sooner. The disadvantage they pose is that obtaining acceleration values from these models implies a heavy computational burden on the on-board processing unit, which is very often too time-consuming for the mission profile. 
             
              In this paper, the technique developed on [1] is used to create a validated Python-based tool that obtains spherical harmonic coefficients from the shape model of an asteroid or comet, given a certain density for the body. This validated software suite, called 
              AstroHarm 
              , is used to analyse the accuracy of the models obtained and the improvements in computational efficiency in a simulated spacecraft orbiting a small body. 
             
            The results obtained are shown offering a qualitative comparison between different order spherical harmonic models and the original shape model. Finally, the creation of a catalogue for harmonics is proposed together with some thoughts on complex modelling using this tool.},
	language = {en},
	number = {S364},
	urldate = {2026-01-16},
	journal = {Proceedings of the International Astronomical Union},
	author = {Peñarroya, P. and Paoli, R.},
	month = oct,
	year = {2019},
	pages = {203--210},
}

@book{vallado_fundamentals_2022,
	address = {Torrance, CA},
	edition = {Fifth edition.},
	series = {Space technology library},
	title = {Fundamentals of astrodynamics and applications},
	isbn = {9781881883227},
	abstract = {"The increased power of microcomputers has changed the study of perturbations perhaps more than any other subfield of astrodynamics, but students and practitioners still need a compendium of well-documented algorithms to use that power. Vallado's book is just such a compendium, integrating two-body dynamics with extensive perturbation methods and real-world applications while consolidating recent literature about coordinates and fundamental concepts. It attempts to use new programming capabilities to exploit the keys for the future-- standardizing, documenting, modularity, efficiency, and maintainability. Ample algorithms numbered and handily summarized at the beginning of the book. Extensive and reworked examples and derivations have been included to save time often spent recreating something that has already been proven. The author cites many references and alternative methods to expand understanding of the topics discussed. We would like to thank Dave Vallado for this most accomplished and comprehensive piece of work. If astrodynamics is your thing...you will be thanking him too." -- Provided by Amazon.},
	language = {eng},
	number = {volume 21},
	publisher = {Microcosm Press},
	author = {Vallado, David A.},
	year = {2022},
	note = {OCLC: 1350448626},
	keywords = {Astrodynamics},
}

@book{schaub_analytical_2018,
	address = {Reston, VA},
	edition = {Fourth edition},
	series = {{AIAA} education series},
	title = {Analytical mechanics of space systems},
	isbn = {9781624105210},
	language = {eng},
	publisher = {American Institute of Aeronautics and Astronautics, Inc(AIAA)},
	author = {Schaub, Hanspeter and Junkins, John L.},
	year = {2018},
}

@book{curtis_orbital_2021,
	address = {Kidlington, Oxford, United Kingdom ; Cambridge, MA, United States},
	edition = {Revised fourth edition.},
	series = {Elsevier aerospace engineering series.},
	title = {Orbital mechanics for engineering students},
	isbn = {9780128240250},
	abstract = {"Orbital Mechanics for Engineering Students, Fourth Edition, is a key text for students of aerospace engineering. While this latest edition has been updated with new content and included sample problems, it also retains its teach-by-example approach that emphasizes analytical procedures, computer-implemented algorithms, and the most comprehensive support package available, including fully worked solutions, PPT lecture slides, and animations of selected topics. Highly illustrated and fully supported with downloadable MATLAB algorithms for project and practical work, this book provides all the tools needed to fully understand the subject."--Publisher's description.},
	language = {eng},
	publisher = {Butterworth-Heinemann},
	author = {Curtis, Howard D.},
	year = {2021},
	note = {OCLC: 1235349370},
	keywords = {Orbital mechanics},
}

@book{arfken_mathematical_2013,
	address = {Oxford},
	edition = {7. ed},
	title = {Mathematical methods for physicists},
	isbn = {9780123846549},
	language = {eng},
	publisher = {Academic},
	editor = {Arfken, George B. and Weber, Hans-Jurgen and Harris, Frank E. and {ScienceDirect (Online service)}},
	year = {2013},
}

\end{document}